\theoremstyle{plain} 
\newtheorem{thm}{Theorem} 
\newtheorem{prop}{Proposition} 
\newtheorem{approp}{Proposition}[section]
\newtheorem{aplem}{Lemma}[section]
\newcommand{\ket}[1]{\left| #1 \right>} % for Dirac bras
\newcommand{\bra}[1]{\left< #1 \right|} % for Dirac kets
\newcommand{\ketbras}[3]{\ket{#1}_{#3}\hspace*{-0.mm}\bra{#2}}
\newcommand{\Id}{\mathbb{1}}
\newcommand{\bId}{{\boldsymbol 1}}
\newcommand{\Tr}{\mathrm{\text{Tr}}}
\newcommand{\aop}{\hat{a}}
\newcommand{\bop}{\hat{b}}
\newcommand{\rop}{{\bf \hat{r}}}
\newcommand{\xop}{\hat{x}}
\newcommand{\pop}{\hat{p}}
\newcommand{\bOmega}{{\bf \Omega}}
\newcommand{\bve}{{\pmb v}}
\newcommand{\bxi}{{\pmb \xi}}
\newcommand{\bsigma}{{\pmb \sigma}}
\newcommand{\beeta}{{\pmb \eta}}
\newcommand{\bGamma}{{\bf \Gamma}}
\newcommand{\bz}{{\pmb 0}}
 \newcommand{\bbeta}{{\pmb \beta}}
 \newcommand{\bu}{{\pmb u}}
\begin{document}

\title{Gaussian Discriminating Strength}

\author{L. Rigovacca}
\affiliation{QOLS Group, Imperial College London, Blackett Laboratory, SW7 2AZ London, UK}   
 
\author{A. Farace}
\affiliation{NEST, Scuola Normale Superiore and Istituto Nanoscienze-CNR, I-56126 Pisa, Italy} 
\affiliation{Max-Planck-Institut f\"ur Quantenoptik, Hans-Kopfermann-Stra{\ss}e 1, 85748 Garching, Germany} 

\author{A. De Pasquale}
\affiliation{NEST, Scuola Normale Superiore and Istituto Nanoscienze-CNR, I-56126 Pisa, Italy} 

\author{V. Giovannetti}
\affiliation{NEST, Scuola Normale Superiore and Istituto Nanoscienze-CNR, I-56126 Pisa, Italy}

\begin{abstract}
We present a quantifier of non-classical correlations for bipartite, multi-mode Gaussian states. It is derived from the Discriminating Strength measure, introduced for finite dimensional systems  in A. Farace et al., New. J. Phys. {\bf 16}, 073010 (2014).  As the latter the new measure exploits the Quantum Chernoff Bound to gauge the  susceptibility of the composite system with respect to local perturbations induced by unitary gates extracted from a suitable set of allowed transformations (the latter being identified by posing some general requirements). Closed expressions are provided for the case of two-mode Gaussian states obtained by squeezing or by linearly mixing via a beam-splitter a factorized two-mode thermal state. For these density matrices, we study how non-classical correlations are related with the entanglement present in the system and with its  total photon number.  
\end{abstract}
%\pacs{}
\maketitle

\section{Introduction}\label{sec: Intro}
The presence of correlations in bipartite quantum systems, that cannot be explained by means of classical probability distributions, is one of the main features of quantum mechanics. The most  important among such correlations is surely entanglement \cite{Horodecki_EntReview}. However, recently much attention has been devoted to the study and the characterization of correlations that go beyond the paradigm of entanglement, being necessary but not sufficient for its presence. They are generically known under the name of discord-like correlations \cite{Modi_Review}, from the name of the first quantifier introduced to measure them, i.e. Quantum Discord \cite{Zurek_QDiscord,Vedral_QDiscord},
and appear when the state of a (say) bipartite system $AB$ cannot be written as a mixture of separable terms that could be locally distinguished in one (or both) subsystems. 
Depending on this last choice the set of states with zero Discord is composed by Classical-Quantum (CQ) states, when local distinguishability is required for example only on subsystem $A$, or Classical-Classical (CC) states, when it is required on both subsystems: 
\begin{align}
\rho^{(CQ)}_{AB}&=\sum_{i} p_{i} \ketbras{i}{i}{A}\otimes\rho_B^{(i)}, \label{def: CQ state}\\
\rho^{(CC)}_{AB}&=\sum_{i,j} p_{ij} \ketbras{i}{i}{A}\otimes\ketbras{j}{j}{B}\label{def: CC state},
\end{align}
where $\{\ket{i}_A\}_i$ and $\{\ket{j}_B\}_j$ are orthonormal local bases (note that differently from the totality of separable states, these sets are not convex).

Since the introduction of the original Quantum Discord, many other measures $\mathcal M$ have been introduced to quantify the amount of non-classical correlations present in a state, approaching the problem from different perspectives. All of them, however, must satisfy a set of conditions on which the community agrees:
\begin{enumerate}
	\item $\mathcal{M}\geq 0$ with equality on CQ (or CC) states;
	\item $\mathcal{M}$ is invariant under local unitary operations;
	\item $\mathcal{M}$ reduces to an entanglement monotone on pure states;
\end{enumerate}
with an additional requirement for an asymmetric measure testing the classicality of subsystem $A$ (i.e. defined with respect to CQ states):
\begin{enumerate}
	\setcounter{enumi}{3}
	\item $\mathcal{M}$ is monotonically decreasing under CPTP maps on the untested subsystem $B$. 
\end{enumerate}

A first class of measures uses a geometric description, based on the minimization of the distance between the considered state and the set of CQ (or CC) states \cite{Horodecki_QDeficit,Modi_RelEntropy,Orszag_BuresGeometric,Orszag_BuresGQubit,Sarandy_Geom1Norm,Giovannetti_TDDQubit,Adesso_NoQ,Piani_GDiscordProblem}.
 Another possible approach is instead rooted on the key observation that any state of the form \eqref{def: CQ state} is left invariant by a (non-trivial) opportunely chosen local measurement or local unitary operation acting on the first subsystem. On the other side, if there exists such local action that leaves the state unchanged, then the state must be in the CQ set. This suggests a way to quantify the asymmetric non-classicality of a quantum state based on the minimum change induced on the whole state by a local action, in the form of either a measurement \cite{Luo_Measurement,Luo_MID,Adesso_GaussMID} or a unitary operation. Focusing on the latter, this minimum change can be quantified by
\begin{equation}\label{def: DoR}
\mathcal{M}_A^{(\mathcal S)}(\rho_{AB}) = \min_{U_A\in \mathcal{S}} D\left(\rho_{AB}, U_A\right),
\end{equation}
where $D$ is some well-behaved non-negative functional depending on the state and on a local unitary $U_A$, and $\mathcal S$ is a suitably chosen set of unitary operations. 
Let us point out that the idea of characterising quantum features of a bipartite system through the detection of global alterations induced by local unitary operations is not new. For example, in \cite{Fu_NonLocality} it is shown that global alterations due to local cyclic operations are a signature of a particular	form of non-locality which is beneficial in	superdense coding schemes. However, as already observed in \cite{Datta_Inequivalence}, the latter investigation is only marginally related to the study of discord-like quantum correlations, which instead is the goal of the present work. In this respect, our assumptions on the set of allowed unitary operations and the subsequent minimization	process will greatly differ from what is done in \cite{Fu_NonLocality}. 

Expression \eqref{def: DoR} lies at the heart of several measures of correlations known in the literature: the Interferometric Power (IP) \cite{Girolami_IP} if $D$ is proportional to the Quantum Fisher Information \cite{Paris_QFI} of the state when the phase to be estimated is encoded by means of $U_A$, the Discriminating Strength (DS) \cite{DS} if $D$ is related to the Quantum Chernoff Bound \cite{Audenaert_QCB} involved in the discrimination between the state and its evolution under $U_A$, and the Discord of Response (DoR) \cite{Illuminati_DoR} if $D$ represents a contractive distance between such states.
The main idea underlying this approach is to exploit the effect of the global alteration, that in presence of correlations is necessarily induced by any local unitary map, in order to have an advantage in some information processing task, where not all the details characterizing the local evolution are known in advance. From this perspective all of the aforementioned measures have a clear operational meaning, being strictly related to the performance of some tasks: phase estimation for the IP, state discrimination for the DS and quantum reading for the DoR. It is also worth stressing that, while being all proper measures of discord-like correlations, the quantities introduced in~\cite{Girolami_IP,Paris_QFI,DS,Audenaert_QCB,Illuminati_DoR} are inequivalent as no proportionality factor can in general be found among them (the different choice of the functional $D$ being associated with different potential applications of the state $\rho_{AB}$).

Let us emphasize that the minimization set $\mathcal{S}$ in Eq.~\eqref{def: DoR} plays an important role in guaranteeing that $\mathcal{M}_A^{(\mathcal S)}(\rho_{AB})$ is a proper measure of discord-like correlations. It is trivial to see that $\mathcal S$ cannot be taken to be the whole unitary group, otherwise choosing $U_A=\Id_A$ the corresponding quantifier would be zero for all states. It has been shown that condition $1.$ (i.e. $\mathcal{M}=0$ exactly on the CQ set) is satisfied for all these measures if the set $\mathcal S$ is taken to be the subset of the unitary operations with a fixed non-degenerate spectrum \cite{Girolami_IP,DS}. Therefore one can introduce a different well-defined quantifier for each spectrum choice. Selecting the optimal one among all possible choices is far from a trivial task. 
The symmetric spectrum composed by the $d_A=\dim \mathcal{H}_A$ roots of unity is conjectured to offer the best resolution, but there is no formal proof yet. Eventually, let us point out that although it is possible that other choices of $\mathcal S$ (apart from fixing a non-degenerate spectrum and spanning all possible basis) might lead to a proper quantifier of discord-like correlations, none is known so far.

Although the previous discussion holds for arbitrary systems, most of the quantities involved were first explicitly computed only for finite-dimensional systems, often qubits. However, there are several cases of interest, for instance in the context of optical interferometry, where one might be interested in evaluating the correlations present in states of continuous-variable systems, especially Gaussian states~\cite{Ferraro_Review,Weedbrook_GaussianReview,Wang_Review}.
In order to answer this question, some measures of non-classical correlations have already been reformulated by restricting the corresponding operations (measurements or quantum maps) so as to preserve the Gaussian character of the considered  states \cite{Paris_GaussianDiscord,Adesso_GaussianDiscord,Adesso_GaussianDiscordOld}. 
However, since this involves a minimization over a restricted set of operations, such quantities will in general be upper bounds for their original counterparts. Only recently the resolution of the Gaussian minimum-entropy conjecture \cite{Giovannetti_GaussConjecture1, Giovannetti_GaussConjecture2} allowed to conclude that the Gaussian Quantum Discord \cite{Paris_GaussianDiscord,Adesso_GaussianDiscord,Adesso_GaussianDiscordOld} coincides with the original Discord on a vast class of states \cite{Pirandola_OptimalityGaussDiscord}.

This manuscript is developed within the mindset of Gaussian quantum states.
In Sec.~\ref{sec: Unitary choice} we first review some basic notions about Gaussian states and Gaussian unitary transformations, then we address the problem of characterizing the set $\mathcal{S}$ within the Gaussian scenario previously detailed. We start our discussion with the standard choice of imposing a fixed spectrum for the unitary operations in $\mathcal S$ with no restrictions on the basis. Next we show how the non-degeneracy condition is changed when we move from the most general case of arbitrary states and operations to the Gaussian scenario.
In particular, if we require some basic properties for the functional $D\left(\rho_{AB}, U_A\right)$, not all spectra among those associated with Gaussian unitary operations can be chosen in order for $\mathcal{M}_D^{(\mathcal S)}(\rho_{AB})$ to be a proper measure of discord-like correlations. Interestingly, the class of allowed sets $\mathcal{S}$ for two-mode systems reduces to a form that has been assumed a priori in the specification  of some previous measures of non-classical correlations to the Gaussian setting \cite{Adesso_GaussianIP,Illuminati_GaussianDoR}, and that now finds a formal justification.
We will then focus on the Gaussian version of the  DS \cite{DS}, based on the Quantum Chernoff bound, for which an explicit analysis in the context of continuous variable systems is still missing. In Sec.~\ref{sec: Gaussian DS} we will obtain a formal expression for the Gaussian DS and we will explicitly evaluate it for some classes of two-mode states, obtained through linear mixing or two-mode squeezing applied on thermal states. We will also discuss the relations between Gaussian DS and entanglement, squeezing and total number of photons for these states, drawing our conclusions in Sec.~\ref{sec: Conclusions}.

\section{Choice of local Gaussian unitary}\label{sec: Unitary choice}

Let us consider $n$ bosonic modes of a continuous-variable system, described by the annihilation operators $\{\aop_i\}_{i=1}^n$. It will be useful to reorganize them within the quadrature vector, satisfying the canonical commutation relation:
\begin{equation}\label{def: r operator}
\rop=(\xop_1,\pop_1,\dots,\xop_n,\pop_n)^\intercal, \qquad \quad  [\rop,\rop^\intercal]=i \bOmega_n,
\end{equation}
where $\xop_i=(\aop_i+\aop_i^\dagger)/\sqrt{2}$, $\pop_i=-i(\aop_i-\aop_i^\dagger)/\sqrt{2}$. The matrix $\bOmega_n$ in Eq.~(\ref{def: r operator})  is the $n$-mode symplectic form
\begin{equation}\label{def: symplectic form}
\bOmega_n =\bigoplus_{i=1}^n\left(\begin{array}{cc}
0 & 1\\
-1 & 0
\end{array}\right),
\end{equation}
which enters in the Robertson-Schr\"{o}dinger uncertainty relation that  any density matrix $\rho$ of the system has to satisfy, i.e. 
\begin{equation}\label{eq: uncertainty on Gamma}
	\bGamma + i\,\bOmega_n \geq 0,
\end{equation}
with  $\bGamma$ being the covariance matrix of $\rho$, 
\begin{equation}\label{def: Gauss moments2}
\bGamma=\Tr\left[\rho\, \{\rop - \bxi,\rop^\intercal-\bxi^\intercal\}_+\right].
\end{equation}
Here $\{\cdot,\cdot\}_+$ describes the anti-commutator and  
\begin{equation}\label{def: Gauss moments1}
\bxi = \Tr[\rho \, \rop], 
\end{equation}
is  the associated  displacement vector~\cite{Weedbrook_GaussianReview,Ferraro_Review,Wang_Review}.

In what follows we will focus on the set $\mathfrak{G}$ formed by the Gaussian states, i.e. by the density matrices $\rho$  which can be expressed as Gibbs states  of Hamiltonians at most quadratic in the quadrature operator $\rop$. They have a Gaussian characteristic function and are hence completely determined by the first and second moments of the operator $\rop$, i.e. 
by the vector $\bxi$ of Eq.~(\ref{def: Gauss moments1}) and by the matrix $\bGamma$ of Eq.~(\ref{def: Gauss moments2}). Directly associated with the notion of Gaussian states, is the set 
 $\mathcal{G}_n$ formed by  $n$-mode Gaussian unitary transformations, i.e. by those unitary operations which maps $\mathfrak{G}$ into itself. 
 By construction, the element of  $\mathcal{G}_n$ must 
 induce a Heisenberg evolution of $\rop$ which is linear, while  ensuring that the commutation relation \eqref{def: r operator} holds   for the evolved quadrature vector operator as well. These two conditions imply that any $U \in \mathcal{G}_n$  can be unambiguously represented   as 
\begin{equation}\label{eq: Gauss unitary action}
U^\dagger\,\rop\, U = {\bf U}\, \rop + \beeta_U, 
\end{equation}  
where $\beeta_U$ is a vector of $\mathbb{R}^{2n}$ and  ${\bf U}$ is a matrix belonging to the symplectic group $Sp(2n,\mathbb{R})$, formed by the $2n\times 2n$ real matrices ${\bf S}$ that preserves the symplectic form $\bOmega_{n}$ \eqref{def: symplectic form}, i.e. 
\begin{equation}
{\bf S}\in Sp(2n,\mathbb{R}) \ \Leftrightarrow \ {\bf S}\,\bOmega_{n}\, {\bf S}^\intercal = \bOmega_n.
\end{equation} 
From Eqs.~(\ref{def: Gauss moments2}) and (\ref{def: Gauss moments1}) one can easily verify that in terms of the displacement $\bxi$ and the covariance matrix $\bGamma$, Eq.~(\ref{eq: Gauss unitary action}) induces the following 
transformations
\begin{equation}\label{eq: Gaussian action on moments}
\bxi \rightarrow {\bf U}\,\bxi + \beeta_U, \qquad \qquad \bGamma \rightarrow {\bf U}\, \bGamma \, {\bf U}^\intercal,
\end{equation}
which fully characterize the input-output mapping at the level of Gaussian states $\mathfrak{G}$. 
A  special class of Gaussian unitaries which will play an important role in the remaining of the paper is provided by the phase-transformations   
\begin{equation}\label{eq:allowed spectra}
U= e^{i\sum_{j=1}^{n}\lambda_j \aop_j^\dagger\aop_j},
\end{equation}
 with $\lambda_j$  real parameters. In the representation~(\ref{eq: Gauss unitary action}) they
are characterized by a null vector $\beeta_U=\bz$ and 
by a block-form  symplectic matrix  $\oplus_{j=1}^{n} {\bf R}(\lambda_j)$, with  ${\bf R}(\lambda)\in SO(2)$ being the rotation
\begin{equation}\label{def: R}
{\bf R}(\lambda)=\left(\begin{array}{cc}
\cos\lambda & -\sin\lambda \\
\sin\lambda & \cos\lambda
\end{array}\right).
\end{equation}
Let us finally recall  that according to the Williamson decomposition~\cite{Williamson,Weedbrook_GaussianReview,Ferraro_Review,Wang_Review} given the covariance matrix $\bGamma$ of an arbitrary state $\rho$ there exists a symplectic matrix
${\bf S}\in Sp(2n,\mathbb{R})$ and a set of coefficients $\{\nu_j\}_{j=1}^n$ such that
\begin{equation}\label{def: Williamson decomposition}
\bGamma = {\bf S} \left(\bigoplus_{j=1}^n \nu_j \bId_2\right){\bf S}^{\intercal}.
\end{equation}
The coefficients $\{\nu_j\}_{j=1}^n$, which are called \textit{symplectic eigenvalues} of $\bGamma$, can be computed as the regular eigenvalues of the matrix  $|i \bOmega_n\bGamma|$,  and fulfill the inequality $\nu_j \geq 1$ due to constraint \eqref{eq: uncertainty on Gamma}. In particular this bound is saturated on pure states, when all $\nu_{j}$ are equal to $1$. The class of Gaussian states admitting a Williamson decomposition with ${\bf S}=\mathbb{1}$  are usually denominated \textit{thermal states}.
\\

We have now all the elements to present our first result. For this purpose let us assume that the $n$ modes of the system are split in two sets: the set $A$ containing $n_A$ modes controlled by Alice and the set $B$ with the remaining $n_B = n- n_A$ modes, controlled by Bob. We are interested in constructing 
a  quantifier $\mathcal M_A^{(\mathcal S)}$ of the form  \eqref{def: DoR}, which can be used to characterize  the correlations between $A$ and $B$, by restricting the analysis to the case where the allowed initial states $\rho_{AB}$ are Gaussian density matrices of the joint system $AB$, and under the additional constraint of reducing the set $\mathcal S$ to a proper subset of the Gaussian unitary transformations $\mathcal{G}_{n_A}$, which operates locally on $A$. As briefly mentioned in the introduction the last assumption is in general not justified a priori: analogously to what done in Refs.~\cite{Paris_GaussianDiscord,Adesso_GaussianDiscord,Adesso_GaussianDiscordOld}
 it is only motivated  by the need of simplifying the analysis by forcing also the transformed counterpart of $\rho_{AB}$ under $U_A$ to be Gaussian.
We will also assume to deal only with sets $\mathcal S$ composed by all Gaussian unitary operations with a certain fixed spectrum, since this is the only known choice that in a generic (non-Gaussian) framework leads to a measure of discord-like correlations. The following Theorem states that under a few basic and reasonable assumptions the spectrum characterizing the set $\mathcal S$ must be chosen among those associated with non-trivial phase-transformations that act locally on every mode, i.e. with Gaussian unitary operations as in Eq. \eqref{eq:allowed spectra} with $\lambda_j$	not being integer multiplies of $2\pi$. This can be interpreted as the Gaussian counterpart of the non-degeneracy condition that applies in a generic framework where no Gaussian restrictions are imposed.
 
\begin{thm} \label{thm: local rotation}
	Let us consider a non-negative functional $D(\rho_{AB},U_A)$, depending upon a Gaussian state $\rho_{AB}$ and a local Gaussian unitary $U_{A}$, such that: 
	\begin{enumerate}
		\item[D1.] $D(\rho_{AB},U_A)=0 \;\Longleftrightarrow \;\rho_{AB} = U_A\,\rho_{AB}\,U_A^\dagger$,
		\item[D2.] $D(V_B\,\rho_{AB}\,V_B^\dagger,U_A)=D(\rho_{AB},U_A), $ for all local Gaussian unitary operations $V_B \in \mathcal{G}_{n_B}$, 
		\item[D3.] $D(V_A\,\rho_{AB}\,V_A^\dagger,U_A)=D(\rho_{AB},V_A^\dagger\,U_A\,V_A)$, for all local Gaussian unitary operations $V_A \in \mathcal{G}_{n_A}$.
	\end{enumerate}
	 Let us also suppose that given $W_A \in \mathcal{S}$, the whole set can be written as
	 \begin{equation}\label{eq: U_A - W_A decomposition}
 \mathcal{S} = \left\{U_A: U_A=V_A W_AV_A^\dagger, V_A \in \mathcal{G}_{n_A}\right\}.
\end{equation} 
	
Then the associated quantity $\mathcal{M}_A^{(\mathcal S)}(\rho_{AB})$, defined as in Eq.~(\ref{def: DoR}), 
satisfies  the property
	\begin{itemize}
		\item[M1.] ${\cal M}_A^{(\mathcal S)}(U_{\text{loc}} \,\rho_{AB}\,U_{\text{loc}}^\dagger)={\cal M}_A^{(\mathcal S)}(\rho_{AB})$, for all local unitary Gaussian operations $U_{\text{loc}} = V_A\otimes V_B$, with $V_A\in \mathcal{G}_{n_A}$ and $V_B\in \mathcal{G}_{n_B}.$
		\end{itemize} 
		Furthermore, it will also fulfill the  property
		\begin{itemize} 
		\item[M2.] ${\cal M}_A^{(\mathcal S)}(\rho_{AB})=0 \;\Longleftrightarrow \; \rho_{AB}=\rho_A\otimes \rho_B$, 
	\end{itemize}
	if and only if the operator $W_A$ of Eq.~(\ref{eq: U_A - W_A decomposition}) can be identified with a non trivial phase-transformation $e^{i\sum_{j=1}^{n_A}\lambda_j \aop_j^\dagger\aop_j}$,  i.e. if and only if the set $\mathcal S$ writes as
\begin{equation}\label{def: form S lambda}
	\mathcal{S}_{\{\lambda_{j}\}} = \left\{U_A: U_A=V_A e^{i\sum_{j=1}^{n_A}\lambda_j \aop_j^\dagger\aop_j}V_A^\dagger, V_A \in \mathcal{G}_{n_A}\right\},
\end{equation} 
	where all the real parameters $\lambda_{j}$ are required not to be integer multiples of $2\pi$.	
\end{thm}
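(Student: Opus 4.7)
The plan is to reduce the fixed-point equation $U_A\rho_{AB}U_A^\dagger=\rho_{AB}$ to linear conditions on the covariance matrix and displacement vector of $\rho_{AB}$, and then to analyse the spectrum of the symplectic matrix $\mathbf{U}_A$ associated with $U_A=V_AW_AV_A^\dagger$. By D1, the vanishing of $\mathcal{M}_A^{(\mathcal{S})}(\rho_{AB})$ is equivalent to the existence of some $U_A\in\mathcal{S}$ that leaves $\rho_{AB}$ invariant, so both M1 and M2 reduce to statements about Gaussian invariants.

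Property M1 needs no spectral information. Starting from $D(U_{\text{loc}}\rho_{AB}U_{\text{loc}}^\dagger,U_A)$ with $U_{\text{loc}}=V_A\otimes V_B$, I would apply D2 to discard $V_B$ and then D3 to rewrite the minimand as $D(\rho_{AB},V_A^\dagger U_A V_A)$. Since $\mathcal{S}$ is the $\mathcal{G}_{n_A}$-orbit of $W_A$ by Eq.~(\ref{eq: U_A - W_A decomposition}), it is mapped to itself by $V_A^\dagger(\,\cdot\,)V_A$, and the minimum over $U_A\in\mathcal{S}$ is unchanged.

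For M2, I would decompose $\bGamma_{AB}$ into the blocks $\bGamma_A,\bGamma_B$ and the cross-correlation block $\bGamma_C$, and observe that the invariance condition decouples into (i)~$\mathbf{U}_A\bGamma_A\mathbf{U}_A^\intercal=\bGamma_A$, (ii)~$(\mathbf{U}_A-\Id)\bGamma_C=0$, and (iii)~an affine matching condition for $\bxi_A$, with $\mathbf{U}_A=\mathbf{V}\mathbf{W}\mathbf{V}^{-1}$. For the ``if'' direction, taking $W_A=e^{i\sum_j\lambda_j\aop_j^\dagger\aop_j}$ with each $\lambda_j\notin 2\pi\mathbb{Z}$, the matrix $\mathbf{W}=\bigoplus_j\mathbf{R}(\lambda_j)$ has eigenvalues $\{e^{\pm i\lambda_j}\}\not\ni 1$; conjugation preserves the spectrum, so $\mathbf{U}_A-\Id$ is invertible and (ii) forces $\bGamma_C=0$, i.e.\ $\rho_{AB}$ factorises. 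Conversely, for any $\rho_A\otimes\rho_B$, Williamson's decomposition $\bGamma_A=\mathbf{S}\bigl(\bigoplus_j\nu_j\bId_2\bigr)\mathbf{S}^\intercal$ together with the choice $\mathbf{V}=\mathbf{S}$ and $\beeta_V=\bxi_A$ delivers (i) via the orthogonality of each $\mathbf{R}(\lambda_j)$ and its commutativity with $\nu_j\bId_2$, while (iii) is automatic.

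The ``only if'' direction proceeds by contraposition, using two probe families. Testing M2 on a vacuum product state on $A$ (for which $\bGamma_A=\bId_{2n_A}$ and $\bxi_A=\bz$) forces some $\mathbf{U}_A\in\mathcal{S}$ to satisfy $\mathbf{U}_A\mathbf{U}_A^\intercal=\bId$: it is simultaneously orthogonal and symplectic, hence an element of $Sp(2n_A,\mathbb{R})\cap O(2n_A)\cong U(n_A)$. The spectral theorem for unitary matrices then diagonalises $\mathbf{U}_A$, through a further $U(n_A)\subset\mathcal{G}_{n_A}$ conjugation, into $\bigoplus_j\mathbf{R}(\lambda_j)$, so the orbit $\mathcal{S}$ coincides with that of a block-rotation representative. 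Should some $\lambda_j$ be an integer multiple of $2\pi$, $\mathbf{W}-\Id$ would acquire a non-trivial kernel, and I would exhibit a Gaussian $\rho_{AB}$ with $\bGamma_C$ supported on that kernel---for instance correlating only those modes on which $\mathbf{W}$ acts trivially---as a non-product state fixed by $W_A$, contradicting M2. Finally, the invertibility of $\mathbf{W}-\Id$ under the surviving constraints lets me shift $\beeta_V$ so as to cancel any residual displacement in the orbit representative, delivering $\beeta_{W_A}=\bz$. The most delicate step, which I would treat with care, is the symplectic orbit classification behind the vacuum probe: showing that a symplectic matrix which is $Sp$-conjugate to some orthogonal-symplectic one is already $Sp$-conjugate to a block rotation $\bigoplus_j\mathbf{R}(\lambda_j)$, a fact that hinges on the identification $Sp(2n_A,\mathbb{R})\cap O(2n_A)\cong U(n_A)$ and on the diagonalisability of unitary operators.
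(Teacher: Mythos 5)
Your proposal is correct and follows essentially the same route as the paper: M1 from D2/D3 together with the orbit structure of $\mathcal S$; necessity via a pure product-state probe forcing $\mathbf{U}_A\in Sp(2n_A,\mathbb{R})\cap O(2n_A)$ and then the block-rotation normal form (the paper's Appendix C lemma, which you correctly flag as the delicate step); sufficiency by noting that the spectrum $\{e^{\pm i\lambda_j}\}$ of $\mathbf{W}$ excludes $1$, so $(\mathbf{U}_A-\Id)\bGamma_{OFF}=0$ forces $\bGamma_{OFF}=\bz$. The only cosmetic differences are that you remove the residual displacement by conjugating with a Weyl translation rather than reading $\beeta_W=\bz$ directly off the invariance of the zero-mean probe, and you make explicit the counterexample for $\lambda_j\in2\pi\mathbb{Z}$, both of which are consistent with the paper's argument.
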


Before going into the details of the proof, let us comment why the assumptions $D1$--$D3$, the structural assumption~(\ref{eq: U_A - W_A decomposition}), 
 and the properties $M1$, $M2$ appear quite natural. 
 \begin{itemize} 
\item The set $D1$--$D3$  refers to the ``distance-like'' quantifier $D$ and are the Gaussian counterparts of analogous properties which are satisfied 
by all the measures of the form~(\ref{def: DoR}) introduced so far, i.e. 
 the Interferometric Power~\cite{Girolami_IP}, the Discriminating Strength~\cite{DS}, and the Discord of Response~\cite{Illuminati_DoR}. In particular  $D1$ requires that $D$ should be sensitive to all those unitary operations that alter the considered state. If this were not the case, $\mathcal{M}^{(\mathcal S)}_D$ would not be able to quantify properly the susceptibility of $\rho_{AB}$ under local unitary maps, on which these quantifiers of non-classical correlations rely upon. $D2$ and $D3$ derive from the reasonable condition
$D(\rho_{AB},U_A) = D(V\rho_{AB}V^\dagger , VU_A V^\dagger)$,   
that describes the independence of the functional $D$ upon a change of basis.
\item The hypothesis~\eqref{eq: U_A - W_A decomposition} on the structure of the elements of $\mathcal S$ can be interpreted similarly to the requirement of having a fixed spectrum in the finite-dimensional case \cite{Girolami_IP,Illuminati_DoR,DS}, without restrictions on the basis. 
 \item $M1$, $M2$ refer instead to the properties that a good measure of non-classical correlations $\mathcal{M}_D^{(\mathcal S)}$ is expected to satisfy. In particular, $M1$ is just the invariance under local unitary operations (see $2.$ in Sec. \ref{sec: Intro}) applied to the Gaussian setting, while $M2$ is nothing but condition $1.$ of Sec.~\ref{sec: Intro}. 
Indeed, in \cite{Rahimi-Keshari_DiscordVerification} and \cite{Adesso_CQGaussian} is shown that the only CQ [see Eq. \eqref{def: CQ state}] Gaussian states are those that are completely uncorrelated, respectively in the two-mode and multi-mode case. A rederivation of this same fact can be found for completeness in Appendix \ref{app: CQ gaussian states}. The measures $\mathcal{M}_D^{(\mathcal S)}$ can therefore be rigorously interpreted as Gaussian multi-mode quantifiers of discord-like correlations, since they nullify exactly on the same set where Quantum Discord does \cite{Zurek_QDiscord,Vedral_QDiscord}.
\end{itemize}

\begin{proof}[Proof of Theorem 1.]
The property $M1$ follows trivially from $D2$ and $D3$ and from Eq.~(\ref{eq: U_A - W_A decomposition}).
Let us next prove that the form of $\mathcal{S}_{\{\lambda_j \}}$ is necessary to get $M2$.
To do so, let us consider an uncorrelated initial Gaussian state ${ \rho_A\otimes\rho_B }$, parametrized by a covariance matrix $\bGamma_{AB}=\bGamma_A\oplus\bGamma_B$ and a displacement vector $\bxi$.
Williamson decomposition \eqref{def: Williamson decomposition}, together with $M1$, allows us to assume without loss of generality 
\begin{equation}\label{eq: product state easy form}
\bGamma_A =\bigoplus_{j=1}^{n_A}\nu_j \bId_2, \qquad \bxi=\bz,
\end{equation}
with $\nu_j\geq 1$ $\forall j=1,\ldots,n_A$.
For every product state, requirements $M2$ and $D1$ impose the existence of a particular $W_A \in \mathcal {S}$ that preserves it. 
From the relation
\begin{align}
	W_A^\dagger\, \rop \, W_A &= \left({\bf W}_A\oplus\bId_B\right) \,\rop + \left(\beeta_W^{(A)},\bz^{(B)}\right)^\intercal,\label{eq: U_A action}
\end{align}
such invariance condition imposes
that $ \beeta_W^{(A)}=\bz^{(A)}$ and 
\begin{equation}
{\bf W}_A\left(\bigoplus_{j=1}^{n_A}\nu_j\bId_2\right){\bf W}_A^\intercal = \left(\bigoplus_{j=1}^{n_A}\nu_j\bId_2\right),
\end{equation}
which, by setting  $\nu_j=1$ $\forall j$, in particular  implies   
\begin{equation}
{\bf W}_A\in Sp(2n_A)\cap O(2 n_A).
\end{equation}
In Appendix \ref{app: symp decomposition} it is shown  that any such ${\bf W_A}$ can be transformed into a direct sum of single-mode rotations by means of orthogonal and symplectic matrices, which can be adsorbed into the action of $V_A$ (appearing in the assumed structure of $\mathcal S$ \eqref{eq: U_A - W_A decomposition}). In other words we can take 
\begin{eqnarray} {\bf W}_A=\bigoplus_{j=1}^{n_A} {\bf R}_A(\lambda_j)\;,  \label{FORMAWA} 
\end{eqnarray}  with ${\bf R}_A(\lambda_j)\in SO(2)$ 
identifying hence $W_A$ with the Gaussian unitary phase-transformation 
	$e^{i\sum_{j=1}^{n_A}\lambda_j \aop_j^\dagger\aop_j}$ -- see discussion above Eq.~(\ref{def: R}).
Finally, in order to have a non-trivial measure (condition $M2$) ${\bf W}_A$ must be different from the identity 
in each of the blocks, imposing hence all the $\lambda_j$'s not to be integer multiples of $2\pi$.

The converse,  i.e. proving that  property $M2$ holds when assuming $\mathcal{S}_{\{\lambda_j \}}$ as in \eqref{def: form S lambda}, follows from the fact that  the non trivial phase-transformation
$e^{i\sum_{j=1}^{n_A}\lambda_j \aop_j^\dagger\aop_j}$  admits the symplectic form~(\ref{FORMAWA}) and a null displacement vector [see Eq.~(\ref{def: R})]. Accordingly,  
via the correspondence~(\ref{eq: Gauss unitary action})  the elements $U_A$ of the set~(\ref{def: form S lambda}) are characterized by
symplectic matrices ${\bf U}_A$ and  vectors $\beeta_U^{(A)}$  of the form 
\begin{eqnarray}\label{def: UA rotation1}
{\bf U}_A&=&{\bf V}_A\,\Big( \oplus_{j=1}^{n_A} {\bf R}_A(\lambda_j)\Big) \,{\bf V}_A^{-1}, \\
\beeta_U^{(A)} &=& (\bId-{\bf U}_A)\,\beeta_V^{(A)}, \label{def: vec1} 
\end{eqnarray} 
where 
 ${\bf V}_A$ and  $\beeta_V^{(A)}$ are, respectively, the symplectic matrix and the vector of the Gaussian unitary $V_A$. 
Now let $\rho_{AB}$ be a Gaussian state which nullifies the quantity ${\cal M}_A^{(\mathcal S)}(\rho_{AB})$. From $D1$ this is only possible if there exists $U_A$ in $\mathcal{S}_{\{\lambda_j \}}$ which
leaves such state invariant. At the level of covariance matrices this formally implies the condition 
\begin{equation}\label{eq: covariance matrix blocks}
\bGamma_{AB}= 
\left(\begin{array}{c|c}
\bGamma_A & \bGamma_{OFF} \\\hline
\bGamma_{OFF}^{\intercal} & \bGamma_B
\end{array}\right)=\left(\begin{array}{c|c}
{\bf U}_A\bGamma_A{\bf U}_A^\intercal & {\bf U}_A\bGamma_{OFF} \\\hline
\bGamma_{OFF}^{\intercal}{\bf U}_A^\intercal & \bGamma_B
\end{array}\right),
\end{equation}
where $\bGamma_{AB}$ is the covariance matrix of $\rho_{AB}$ expressed in  blocks form ($\bGamma_A$ and $\bGamma_B$ being the covariance matrices of the reduced density operators associated with the subsystems $A$ and $B$ respectively),  and where
the last term refers to the joint covariance matrix of $\rho_{AB}$ after the action of $U_A$. 
By focusing on the off diagonal-blocks we observe that Eq.~(\ref{eq: covariance matrix blocks}) requires 
\begin{eqnarray}\label{SOLC}
{\bf V}_A^{-1}  \bGamma_{OFF} =\oplus_{j=1}^{n_A} {\bf R}_A(\lambda_j)  \; {\bf V}_A^{-1} \bGamma_{OFF},
\end{eqnarray}  
for some symplectic matrix ${\bf V}_A$. Notice that satisfying the above expression is equivalent to finding, for each block, a collection of  vectors $\bve$ solving the 
eigenvalue equation ${\bf R}_A(\lambda_j) \bve = \bve$.  
However, by construction ${\bf R}_A(\lambda_j)$ admits as eigenvalues the phases $e^{\pm i \lambda_j} \neq 1$. Accordingly, the only possible solution is  $\bve=0$, i.e. to have ${\bf V}_A^{-1} \bGamma_{OFF} =\bz$ in Eq.~(\ref{SOLC}), i.e. 
 $\bGamma_{OFF} =\bz$, i.e. to have that the Gaussian density matrix $\rho_{AB}$ is a product state.
\end{proof}

\section{Gaussian Discriminating Strength}\label{sec: Gaussian DS}
In the following we will focus on a particular quantifier of non-classical correlations, the Discriminating Strength (DS) introduced in~\cite{DS}, that can be defined using in Eq. \eqref{def: DoR} the following functional:
	\begin{equation}\label{def: D_DS}
	D_{DS}(\rho_{AB},U_A) = 1- Q(\rho_{AB},U_A \,\rho_{AB}\, U_A^\dagger),
	\end{equation}
where $Q$ acts on a pair of states as 	
\begin{equation}\label{def: QCB}
Q(\rho_0,\rho_1) = \min_{s\in[0,1]}\Tr\left[\rho_0^s\,\rho_1^{1-s}\right].
\end{equation}
This yields the following expression for the DS of a bipartite state $\rho_{AB}$:
\begin{eqnarray}
{\cal DS}_A^{(\mathcal S)}(\rho) &=& 1-\max_{U_A\in \mathcal{S}} Q\left(\rho, U_A\,\rho\,U_A^\dagger \right)\label{def: DS}\\
&=& 1-\max_{U_A\in \mathcal{S}} \min_{s\in[0,1]}\Tr\left[\rho^s\,U_A\,\rho^{1-s}\,U_A^\dagger\right].
\end{eqnarray}
The quantity $Q$ that appears in the above definitions is the Quantum Chernoff Bound \cite{Audenaert_QCB} that intervenes in a state-discrimination scenario. We refer to Appendix \ref{app: DS protocl} for a physical interpretation of such quantity, that provides the operational meaning of the Discriminating Strength measure.

We will now discuss the problem of obtaining an expression for a quantifier analogous to the DS for the class of Gaussian states. 
For this purpose we will rely on the results of Theorem~\ref{thm: local rotation} by observing that the functional in Eq. \eqref{def: D_DS}
fulfills the assumptions $D1$--$D3$ when evaluated over Gaussian states and Gaussian unitaries -- this can be trivially verified directly from Eq.~(\ref{def: QCB}). 
We will hence restrict the optimization set $\mathcal S$ 
in \eqref{def: DS}  as described in Sec.~\ref{sec: Unitary choice} . In particular we face the task of evaluating $Q(\rho_{AB},U_A\,\rho_{AB}\, U_A^\dagger)$ for a Gaussian state $\rho_{AB}$ and a Gaussian operation $U_A$. To do so we can use the Williamson decomposition \eqref{def: Williamson decomposition}, that allows to obtain simple expressions for the exponentiated states appearing in the Quantum Chernoff Bound \eqref{def: QCB}. Indeed, it can be shown that, up to a normalization factor \begin{equation}
\Tr[\rho_{AB}^s] =\Pi_{j=1}^{n} G_s(\nu_j),
\end{equation}
$\rho_{AB}^s$ is still a Gaussian state whose covariance matrix $\bGamma_{AB}^{(s)}$ can be obtained from \eqref{def: Williamson decomposition} just changing the original symplectic eigenvalues
$\{\nu_j\}_j$, pertaining to $\rho_{AB}$, to the set of functions $\{\Lambda_s(\nu_j)\}_j$, 
$\Lambda_s$ and $G_s$ being explicitly evaluated as \cite{Pirandola_LambdaGFunctions}:
\begin{align}\label{def: lambda function}
\Lambda_s(x)&=\frac{(x+1)^s + (x-1)^s}{(x+1)^s - (x-1)^s},\\
G_s(x)&=\frac{2^s}{(x+1)^s - (x-1)^s}.
\end{align}
Noticing also that $\rho_{AB}$ and $U_A\,\rho_{AB} \,U_A^\dagger$  show
the same symplectic eigenvalues, and exploiting the characteristic function formalism \cite{Marian_Fidelity},
the Quantum Chernoff Bound \eqref{def: QCB} can then be evaluated as \cite{Illuminati_GaussianDoR}:
\begin{equation}\label{eq: Formal QCB}
Q(\rho_{AB},U_A\rho_{AB} U_A^\dagger)=\min_{s\in[0,1]}Q_s e^{-\Delta_s},
\end{equation}
with
\begin{align}
Q_s &= \frac{\Pi_{j=1}^{n} \left[\Lambda_s(\nu_j) + \Lambda_{1-s}(\nu_j) \right]}{\sqrt{\det[\bGamma_{AB}^{(s)} + {\bf \tilde U_A}\bGamma_{AB}^{(1-s)}{\bf \tilde U_A}^\intercal]}}\label{ris: Qs quantiy}\\
\Delta_s&=\tilde\beeta_U^{(A)\intercal}\left[\bGamma_{AB}^{(s)} + {\bf \tilde U_A}\bGamma_{AB}^{(1-s)}{\bf \tilde U_A}^\intercal\right]^{-1}\tilde\beeta_U^{(A)},\label{eq: Delta expression}
\end{align}
and with 
\begin{eqnarray} \label{DEFTILDE} 
{\bf \tilde U_A} = {\bf U}_A\oplus\bId_B, \qquad 
\tilde\beeta_U^{(A)} = (\beeta_U^{(A)},\bz^{(B)}),
\end{eqnarray}  
being respectively the extensions to $B$ [see  Eq.~(\ref{eq: U_A action})] of the local symplectic matrix ${\bf U}_A$ and of the local vector $\beeta_U^{(A)}$ which define the action of the Gaussian unitary $U_A$ via the correspondence~(\ref{eq: Gauss unitary action}).

Enforcing hence the restriction~(\ref{def: form S lambda}) discussed in Sec.~\ref{sec: Unitary choice}, in constructing our Gaussian version of the DS 
we shall use  ${\bf U}_A$ and   $\beeta_U^{(A)}$  of the form given in Eqs.~(\ref{def: UA rotation1}) and (\ref{def: vec1}),
optimizing the resulting expression  with respect to  ${\bf V}_A$ and  $\beeta_V^{(A)}$ which parametrize the elements of the set $\mathcal{S}_{\{\lambda_{j}\}}$. 
Notice, however, that while the minimization  with respect to  ${\bf V}_A$ can be expected, we can intuitively get rid of the displacement parameter $\beeta_V^{(A)}$.
Indeed, when dealing with non-classicality measures one is always interested in the worst-case choice of local $U_A$, that alters the states as little as possible. For this reason the choice $\beeta_V^{(A)}=\bz$, leading to a null displacement on the state, is expected to be optimal in all situations of interest.
 Exploiting Eqs.~\eqref{eq: Formal QCB} and \eqref{eq: Delta expression}, we are now in the position of showing that at least for the Gaussian version of DS 
 this is indeed the case. 
For this purpose, observe that the matrix $\bGamma_{AB}^{(s)} + {\bf \tilde U_A}\bGamma_{AB}^{(1-s)}{\bf \tilde U_A}^\intercal$ is positive definite, yielding $\Delta_s\geq 0$ with equality for $\beeta_V^{(A)}=\bz$, and hence $Q_s e^{-\Delta_s} \leq Q_s$.
This last inequality holds also taking the minimum over $s$ and the maximum over $U_A$, so that we can write
\begin{equation}
\max_{{\bf V}_A,\beeta_V^{(A)}} Q(\rho,U_A\,\rho\, U_A^\dagger) \leq \max_{{\bf V}_A,\,\beeta_V^{(A)}=\bz}  Q(\rho,U_A\,\rho\, U_A^\dagger).
\end{equation}
Since keeping $\beeta_V^{(A)}$ fixed corresponds to consider a smaller maximization set, we can conclude that we are allowed to drop all displacements, maximizing only over ${\bf V_A} \in Sp(2n_A,\mathbb{R})$. Therefore, without loss of generality, we finally define
 the Gaussian Discriminating Strength  (GDS) as:
\begin{eqnarray}\label{def: GaussianDS}
&&{\cal GDS}_A^{ \{\lambda_{j}\} }(\rho_{AB})  \\ \nonumber 
&& \qquad = 1-\max_{{\bf V_A}}\min_{s\in[0,1]} \frac{\Pi_{j=1}^{n} \left[\Lambda_s(\nu_j) + \Lambda_{1-s}(\nu_j) \right]}{\sqrt{\det[\bGamma_{AB}^{(s)} + {\bf \tilde U_A}\bGamma_{AB}^{(1-s)}{\bf \tilde U_A}^\intercal]}},
\end{eqnarray}
with ${\bf \tilde U_A}$ linked to ${\bf  V_A}$ as detailed previously.

\subsection{Two-mode case}
We will now obtain closed  expressions for the GDS when referred to special classes of two-mode bipartite systems (i.e. $n=2$ and $n_A=n_B=1$). In performing the optimization
of Eq.~(\ref{def: GaussianDS}) we use the Euler decomposition \cite{EulerDec_Book} of a symplectic matrix to parametrize ${\bf V}_A$, according to which every single mode
${\bf S}\in Sp(2)$ can be written as 
 \begin{equation}\label{def: Euler Decomposition}
{\bf S} = {\bf R}(\theta){\bf S}^{(1)}(x){\bf R}(\theta'),
\end{equation}
where ${\bf R}(\theta), {\bf R}(\theta')\in SO(2)$ as in Eq.~(\ref{def: R}), and 
\begin{equation}\label{def: R and S1 matrix}
{\bf S}^{(1)}(x)=\left(\begin{array}{cc}
e^{+x} & 0 \\
0 &	e^{-x}
\end{array}\right),
\end{equation}
with $x$ being a real parameter.
 
Moreover, since ${\bf V}_A$ intervenes in Eq.~(\ref{def: UA rotation1}) always in the product 
${\bf U}_A={\bf V}_A\,{\bf R}_A(\lambda)\,{\bf V}_A^{-1}$, when $n_A=1$ only the squeezing and the rotation on the left are relevant in the Euler decomposition \eqref{def: Euler Decomposition}, being $SO(2)$ abelian. Therefore, ${\bf V_A}$ can be effectively
expressed as 
\begin{equation}\label{def: VA parametrization}
{\bf V}_A (\theta, x)= {\bf R}_A(\theta)\,{\bf S}^{(1)}_A(x),
\end{equation}
yielding from  Eq.~(\ref{def: UA rotation1}) the following functional dependence for the symplectic matrix of $U_A$, 
\begin{eqnarray}\label{explicit UA}
{\bf U}_A(\theta,x) = 
{\bf R}_A(\theta)\,{\bf S}^{(1)}_A(x){\bf R}_A(\lambda){\bf S}^{(1)}_A(-x){\bf R}_A(-\theta), \nonumber \\\label{impoEQ} 
\end{eqnarray}
$\theta\in [0,2\pi[$ and $x \in \; ] \!-\! \infty , \infty[$ being the parameters over which the maximization  of Eq.~(\ref{def: GaussianDS}) has to be taken for fixed $\lambda \neq 2\pi n$, $n\in\mathbb Z$.
The resulting GDS for the two-mode case  becomes  hence  
\begin{eqnarray}\label{def: GaussianDS2modes}
&& {\cal GDS}_A^{(\lambda)}(\rho_{AB})  \\ 
\nonumber && \quad = 1-\max_{\theta, x}\min_{s\in[0,1]} \frac{ \Pi_{j=1}^{2} \left[\Lambda_s(\nu_j) + \Lambda_{1-s}(\nu_j) \right]
}{\sqrt{F^{(\lambda)}_s(\theta,x)}},
\end{eqnarray}
where we introduced the function
\begin{eqnarray} 
&& F_s^{(\lambda)}(\theta, x) 
\\ &&\;\; 
\nonumber = \det[\bGamma_{AB}^{(s)} + ( {\bf U}_A(\theta,x)\oplus\bId_B)\bGamma_{AB}^{(1-s)}({\bf U}^\intercal_A(\theta,x)\oplus\bId_B)] ,
\end{eqnarray} 
which bares the GDS dependence upon $\lambda$ via Eq.~(\ref{impoEQ}).

We also observe that from the invariance  under local Gaussian operations of our functional~[see property $M1$ of Theorem 1], the covariance matrix $\Gamma_{AB}$  of any two-mode input Gaussian state $\rho_{AB}$ can be considered in the standard form \cite{NormalForm1,NormalForm2,Ferraro_Review}: 
\begin{equation}\label{def: Standard Form}
\bGamma_A = a\,\bId_2,\quad \bGamma_B = b\,\bId_2, \quad\bGamma_{OFF} = \text{Diag}(c,d),
\end{equation} in terms of $2\times 2$ blocks defined as in Eq.~\eqref{eq: covariance matrix blocks}.
Unfortunately even with all these simplifications the explicit evaluation of \eqref{def: GaussianDS2modes} on every two-mode Gaussian state is still not trivial. This is because the calculations cannot be carried out directly using the coefficients $a,b,c,d$ of Eq.~(\ref{def: Standard Form}), since the quantity $Q_s$ of Eq.~(\ref{ris: Qs quantiy}) is expressed in terms of the Williamson decomposition \eqref{def: Williamson decomposition}, where an high number of parameters intervene in the parametrization of the symplectic matrix ${\bf S}$ \cite{Serafini_SympInvariants}.
For this reason from now on we will consider only two classes of two-mode Gaussian states, characterized by 
$|c|=|d|$ 
in the standard form. Such states are obtained from thermal states by means of linear mixing  ($c=d$), or two-mode squeezing ($c=-d$). Most importantly, in both cases the symplectic matrix ${\bf S}$ that appears in the Williamson
decomposition~\eqref{def: Williamson decomposition} of the covariance matrix $\Gamma_{AB}$ can be described in terms of a single parameter~\cite{Weedbrook_GaussianReview,Ferraro_Review}.
In particular, in terms of the Pauli matrix $\bsigma_3 = \text{Diag(1,-1)}$, for a two-mode squeezed thermal state $\rho_{AB}^{(\rm sq)}$ one identifies
 the symplectic matrix of~\eqref{def: Williamson decomposition} with 
\begin{equation}\label{def: symp squeezing}
{\bf S}_{\rm sq}(r) = \left(\begin{array}{c|c}
\cosh r \,\bId_2 & \sinh r \,\bsigma_3\\ \hline
\sinh r \,\bsigma_3 & \cosh r \,\bId_2
\end{array}\right),
\end{equation}
 the connection with Eq.~(\ref{def: Standard Form}) being provided by the expressions \cite{Weedbrook_GaussianReview,Adesso_SympInvariants}
\begin{equation}\label{eq: stand parameter minus 1}
\nu_1-\nu_2 = a-b, \quad (\nu_1 + \nu_2)^2 = (a+b)^2 - 4c^2,
\end{equation}  
\begin{equation}\label{eq: stand parameter minus 2}
\sinh^2(2r) = \frac{4c^2}{(a+b)^2 - 4c^2},
\end{equation}
 $\nu_1$ and $\nu_2$ being the symplectic eigenvalues of $\Gamma_{AB}$. 
For a thermal state after a linear mixing via a beam splitter  $\rho_{AB}^{(\rm lm)}$, instead, Eq.~(\ref{def: symp squeezing}) gets replaced by 
\begin{equation}\label{def: symp linear mixing}
{\bf S}_{\rm lm}(\phi) = \left(\begin{array}{c|c}
\cos \phi \,\bId_2 & -\sin \phi \,\bId_2\\ \hline
\sin \phi \,\bId_2 & \cos \phi \,\bId_2
\end{array}\right),
\end{equation}
while Eqs.~(\ref{eq: stand parameter minus 1}) and (\ref{eq: stand parameter minus 2}) by the identities 
\begin{equation}\label{eq: stand parameter plus 1}
\nu_1+\nu_2 = a+b, \quad (\nu_1 - \nu_2)^2 = (a-b)^2 + 4c^2,
\end{equation}  
\begin{equation}\label{eq: stand parameter plus 2}
\sin^2(2\phi) = \frac{4c^2}{(a-b)^2 + 4c^2}.
\end{equation}

\subsubsection{Explicit evaluation on linear mixing and two-mode squeezing of thermal states} 
It is important to stress that the minimum over the parameter $s$ 
in Eq.~\eqref{def: GaussianDS} is in general difficult to evaluate, because it has to be performed for a generic choice of the parameters $\theta$ and $x$ which, via Eq.~(\ref{impoEQ}),
define the symplectic matrices ${\bf U}_A$ over which we have to take the minimization. A lower bound on GDS can be obtained setting $s=1/2$ throughout the computation. In \cite{Illuminati_GaussianDoR} it is shown that when the rotation parameter in \eqref{impoEQ} is set to be $\lambda=\pm\pi/2$, this bound is achieved for every Gaussian state.
In Appendix \ref{app: s minimum} we show that for the two aforementioned classes of states, such minimum is actually reached in $s=1/2$ for every choice of $\lambda$. However, we have numerical evidences that this is no more true for a generic two-mode Gaussian state not fulfilling the symmetric condition~$|c|=|d|$.

Applying the aforementioned result in Appendix \ref{app: s minimum}, we now evaluate \eqref{def: GaussianDS2modes} setting directly $s=1/2$, obtaining 
\begin{eqnarray}\label{def: GaussianDS2modesSIMP}
 {\cal GDS}_A^{(\lambda)}(\rho_{AB})  = 1-\frac{ 4(A_+^2 - A_-^2)}
{\sqrt{\min_{\theta, x}  F_{1/2}^{(\lambda)}(\theta,x)}},
\end{eqnarray}
where in writing the nominator we introduced the quantities 
\begin{equation}\label{def: A plus minus}
	A_{\pm} = \frac{\Lambda_{1/2}(\nu_1) \pm \Lambda_{1/2}(\nu_2)}{2}.
\end{equation}
After long but straightforward calculations, we can also express the term at the denominator as 
\begin{eqnarray}
F_{1/2}^{(\lambda)}(\theta,x) &=& \left[4 (A_+^2-A_-^2)+4 \sin^2(\lambda/2)\; S\;\right]^2 \nonumber \\
&+& 16 \sinh^2(2x) \sin^2(\lambda) (A_+^2-A_-^2)\; C,  \label{defFmezzo} 
\end{eqnarray}
where $S$ and $C$ are  positive quantities  defined as 
\begin{eqnarray} 
S&=&A_{+}^2 \sinh^2(2r), \\  
 C&=&A_{+}^2\cosh^2(2r)-A_{-}^2, 
 \end{eqnarray}
 for the squeezed thermal states $\rho^{(\rm sq)}_{AB}$ of Eq.~(\ref{def: symp squeezing}), and  
 \begin{eqnarray} 
S&=&A_{-}^2 \sin^2(2\phi), \\  
 C&=& A_{+}^2  - A_{-} ^2\cos^2(2\phi), 
 \end{eqnarray} 
 for the thermal states after a linear mixing $\rho^{(\rm lm)}_{AB}$ of Eq.~(\ref{def: symp linear mixing}). 
Notice that in Eq.~(\ref{defFmezzo}) there is no dependence upon $\theta$ and that the left hand side  reaches the minimum for 
$x=0.$ 
Accordingly we can write 
\begin{align}
{\cal GDS}_A^{(\lambda)}\left(\rho^{(\rm sq)}_{AB}\right) &= \frac{\sinh^2(2r)\sin^2(\lambda/2)}{\left[1-\left(\frac{A_-}{A_+}\right)^2\right] + \sinh^2(2r)\sin^2(\lambda/2)},\label{ris: GDS squeezing}\\
{\cal GDS}_A^{(\lambda)}\left(\rho^{(\rm lm)}_{AB}\right) &= \frac{\sin^2(2\phi)\sin^2(\lambda/2)}{\left[\left(\frac{A_+}{A_-}\right)^2 - 1\right] + \sin^2(2\phi)\sin^2(\lambda/2)} \label{ris: GDS lin mix}.
\end{align}
The optimal measure for these classes is therefore the one described by the parameter $\lambda=\pi$, that yields the maximum amount of correlations for every given state (e.g. see Figure \ref{fig: GDS}). Notice also that due to the dependence upon $A_-^2$, the obtained result is invariant under the exchange of the two subsystems [see Eq.~\eqref{def: A plus minus}]. This feature appeared also in the Gaussian Interferometric Power \cite{Adesso_GaussianIP}, and it is a peculiarity of the considered classes of states. 
Qualitatively we can see that fixing the squeezing $r$ or the linear mixing parameter $\phi$, GDS approaches its maximum value $1$ when $A^2_-\simeq A^2_+$, that from \eqref{def: A plus minus} corresponds of having a big gap between the symplectic eigenvalues $\nu_1$ and $\nu_2$ of $\bGamma_{AB}$ (i.e. in the excitation numbers of the thermal states). This is analogous to the result pointed out in \cite{Illuminati_GaussianDoR} for the Gaussian Discord of Response, where the amount of correlations can be increased with the asymmetry on the number of thermal photons between the two subsystems. Notice also that symmetric states ($\nu_1=\nu_2$) yield the same amount of non-classicality as pure states (which is zero for the linear mixing case), and that for two-mode squeezed states the maximum value of $1$ can be obtained for every thermal state in the limit of $r \to \infty$. Similarly, the maximum amount of correlations obtained through linear mixing is achieved by means of a balanced beam splitter ($\phi = \pi/4$), but it can approach $1$ only for highly asymmetric initial thermal states. For more informations on the correlating power of beam splitters, see \cite{Paris_BSCorrelations,Kim_BSCorrelations1,Kim_BSCorrelations2}.

\begin{figure}
	\centering
	\includegraphics[width=0.45\textwidth]{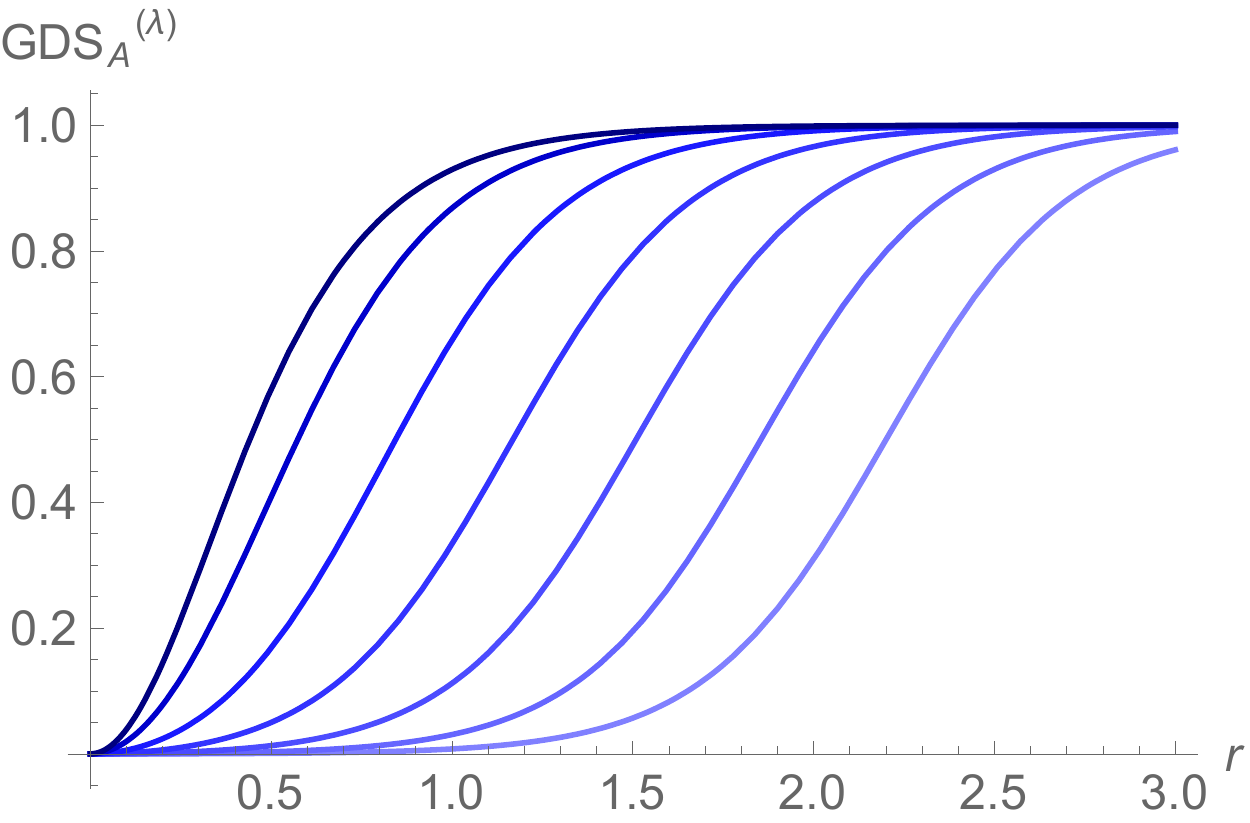}
	\caption{(Color online) GDS of the two-mode squeezed vacuum state as a function of the squeezing parameter. In the plot we have used a set of exponentially decreasing parameters $\lambda=\frac{\pi}{2^k}$. From left to right we find the curves for $k = 0, 1,\dots 6$, the optimal choice being associated with $k=0$, i.e. $\lambda = \pi$.  \label{fig: GDS}}
\end{figure}

The expressions obtained here are written in terms of the parameters that intervene in the Williamson decomposition \eqref{def: Williamson decomposition}: exploiting Eqs.~(\ref{eq: stand parameter minus 1}),~(\ref{eq: stand parameter minus 2}), (\ref{eq: stand parameter plus 1}) and (\ref{eq: stand parameter plus 2}) we can, however, convert them in terms of the 
the parameters $a$, $b$, $c$ and $d$ which appears in the  standard form \eqref{def: Standard Form}.
In particular 
for symmetric two-mode squeezed thermal states, i.e. $a=b$ or $\nu_1=\nu_2$, this yields 
\begin{equation} \label{symmm} 
{\cal GDS}_A^{(\lambda)}(\rho_{AB}^{(\rm sq,sym)}) = \frac{c^2 \sin^2(\lambda/2)}{a^2-c^2 \cos^2(\lambda/2)},
\end{equation}
that when $\lambda=\pi/2$ coincides with the related result on the Quantum Chernoff Bound obtained in \cite{Illuminati_GaussianDoR}. For asymmetric states, however, the contribution of $(A_-/A_+)^2$ becomes cumbersome when expressed as a function of standard form parameters. 
An  expression analogous to \eqref{symmm}  can be derived also for the linear mixing case.
Indeed for  $a=b$ (which in this case does not correspond to have $\nu_1=\nu_2$) the GDS reads:
\begin{eqnarray}
&& {\cal GDS}_A^{(\lambda)}(\rho_{AB}^{(\rm lm,sym)}) \\ &&\quad =\frac{4c^2\sin^2(\lambda/2)}{\left[\sqrt{(a+c)^2-1} + \sqrt{(a-c)^2-1}\right]^2- 4c^2\cos^2(\lambda/2)}. \nonumber 
\end{eqnarray}

\subsection{Relation between GDS and entanglement - total number of photons}\label{sec: GDS relation}
In the previous sub-section we showed that keeping the squeezing or the mixing parameter fixed, the amount of correlations increases with the asymmetry in the number of thermal photons used. Here we want to discuss how the GDS behaves for the same two classes of states with respect to entanglement or to the total number of photons. In other words, our goal here is the identification of the optimal state achieving the maximum amount of correlations, when we fix either the degree of entanglement or the total energy contained in the state.

We begin by considering entanglement on the squeezed class only, being $\rho_{AB}^{(\rm lm)}$ always separable. To quantify it, we can use the logarithmic negativity \cite{Vidal_LogNeg,Plenio_LogNeg}, defined as
\begin{equation}\label{def: log neg}
\mathcal{E} = \max\{-\log(\tilde\nu_-),0\},
\end{equation}
where $\tilde{\nu}_-$ is the smallest symplectic eigenvalue of the partially transposed state. Exploiting the method based on the symplectic invariants \cite{Serafini_SympInvariants,Adesso_SympInvariants}, $\tilde \nu_-$ can be explicitly written as:
\begin{equation}\label{eq: PPT NuMinus}
	\mathcal{\tilde\nu_-} = \sqrt{\frac{\tilde{\Delta} - \sqrt{\tilde{\Delta }^2 - 4 \det\bGamma_{AB}}}{2}},
\end{equation}
where $\tilde{\Delta} = \det\bGamma_A + \det\bGamma_B -2\det\bGamma_{OFF}$. For a two-mode squeezed thermal state this corresponds to $\det\bGamma_{AB} = \nu_1^2\nu_2^2$ and
\begin{equation}\label{def: delta tilde}
\tilde{\Delta} = \cosh(4r)\left(\frac{\nu_1+\nu_2}{2}\right)^2 + \left(\frac{\nu_1-\nu_2}{2}\right)^2,
\end{equation}
so that an explicit expression for $\mathcal E (\nu_1,\nu_2,r)$ can be obtained (see Eq.~\eqref{app: explicit f} in Appendix \ref{app: ent-N relation} if interested). In particular, for the pure two-mode squeezed vacuum one has $\mathcal E(1,1,r) = 2|r|$. Therefore, for this class of pure states GDS is a monotonic function of the logarithmic negativity. At least in this particular case, we can see how property $3.$ of a good measure of non-classical correlations (see Sec. \ref{sec: Intro}), namely of being an entanglement monotone on pure states, is satisfied.

We numerically evaluated  GDS (with the optimal value $\lambda=\pi$) and the logarithmic negativity for a set of $10^6$ two-mode squeezed thermal states randomly generated with the constraints $1\leq\nu_1=\nu_2\leq 20$ and $0\leq r \leq 5$. The result is shown in Figure \ref{fig: GE}, where we put in evidence the behavior of symmetric states with $\nu_1=\nu_2$ in both the pure (red lower line) and the mixed case (black dashed lines). Notice that for every fixed value of entanglement GDS is minimized on the pure two-mode squeezed vacuum, on which the entanglement is easily related to the squeezing parameter by the relation $\mathcal E(1,1,r) = 2|r|$. Such bound holds even if we allow $\nu_1$ to be different from $\nu_2$.
On the other side, considering mixed states the value of GDS is allowed to go as close to $1$ as desired. These facts are formally stated in the following proposition, which is proven in Appendix \ref{app: ent-N relation}. 

\begin{prop}\label{thm: ent relation}
	Consider ${\cal GDS}_A^{(\lambda)}(\nu_1,\nu_2,r)$ for a two-mode squeezed thermal state, and its logarithmic negativity $\mathcal E(\nu_1,\nu_2,r)$. Then:
	\begin{itemize}
		\item Fixed $\mathcal{E}$, the GDS is minimized on pure states:
		\begin{align}
		{\cal GDS}_A^{(\lambda)}(\nu_1,\nu_2,r)\geq  {\cal GDS}_A^{(\lambda)}\left(1,1,\mathcal E(\nu_1,\nu_2,r)/2\right);\notag
		\end{align}
		\item For all $\mathcal E\geq0$ and $\Delta$ such that
		\begin{equation}
		{\cal GDS}_A^{(\lambda)}(1,1,\mathcal E /2)\leq \Delta< 1,
		\end{equation}
		there exists a symmetric state with $\nu_1=\nu_2=\nu$ with entanglement $\mathcal E$ and ${\cal GDS}_A^{(\lambda)} = \Delta$.
	\end{itemize}
\end{prop}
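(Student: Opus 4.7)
The plan is to work directly from the closed-form expression~\eqref{ris: GDS squeezing} for the GDS of two-mode squeezed thermal states, coupled with the explicit formulas for $\tilde{\nu}_-$ and $\mathcal E$ given in Eqs.~\eqref{eq: PPT NuMinus}--\eqref{def: delta tilde}. Setting $t_j:=\Lambda_{1/2}(\nu_j)$, so that $A_+^2-A_-^2=t_1 t_2$ and $A_+^2=(t_1+t_2)^2/4$, the GDS is a strictly increasing function of the single combination
\begin{equation}
h(\nu_1,\nu_2,r) \;:=\; \frac{\sinh^2(2r)\,(t_1+t_2)^2}{4\,t_1 t_2}.
\end{equation}
Both items of the proposition then reduce to claims about how $h$ compares to $\sinh^2\!\bigl(\mathcal E(\nu_1,\nu_2,r)\bigr)$.

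My first move for the lower bound (item one) is to notice that on the pure two-mode squeezed vacuum with parameter $r_p=\mathcal E/2$ one has $t_1=t_2=1$, hence $h=\sinh^2(\mathcal E)$, so the inequality ${\cal GDS}_A^{(\lambda)}(\nu_1,\nu_2,r)\ge{\cal GDS}_A^{(\lambda)}(1,1,\mathcal E/2)$ is equivalent to $h(\nu_1,\nu_2,r)\ge\sinh^2\!\bigl(\mathcal E(\nu_1,\nu_2,r)\bigr)$ for every $\nu_1,\nu_2\ge 1$ and $r\ge 0$. The symmetric case $\nu_1=\nu_2=\nu$ is immediate: one has $h=\sinh^2(2r)$, and direct simplification of Eqs.~\eqref{eq: PPT NuMinus}--\eqref{def: delta tilde} gives $\tilde{\nu}_-=\nu\,e^{-2r}$, hence $\mathcal E=\max(2r-\log\nu,0)\le 2r$, after which monotonicity of $\sinh^2$ closes the argument. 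To attack the fully asymmetric case I would rewrite $\sinh^2(\mathcal E)=(1-\tilde{\nu}_-^2)^2/(4\tilde{\nu}_-^2)$ using $\tilde{\nu}_-^2=(\tilde{\Delta}-\sqrt{\tilde{\Delta}^2-4\nu_1^2\nu_2^2})/2$ together with the explicit form of $\tilde{\Delta}$ from Eq.~\eqref{def: delta tilde}, and then reduce the claim either by a Lagrange multiplier analysis at fixed $\mathcal E$ (aiming to show that on the level set $\{\mathcal E=\text{const}\}$ the minimum of $h$ is attained on the pure boundary $\nu_1=\nu_2=1$ or on the symmetric diagonal $\nu_1=\nu_2$), or by establishing convexity of the suitably rescaled difference $h-\sinh^2(\mathcal E)$ in $(t_1,t_2)$ at fixed $r$ and fixed $\tilde{\Delta}$. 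Controlling the $(\nu_1-\nu_2)^2$ contributions that enter $h$ through $(t_1-t_2)^2$ and $\tilde{\Delta}$ simultaneously is the main obstacle I anticipate.

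For item two the argument is constructive. Restricting to symmetric states $\nu_1=\nu_2=\nu$ one has $A_-=0$, and~\eqref{ris: GDS squeezing} collapses to
\begin{equation}
g(r)\;:=\;{\cal GDS}_A^{(\lambda)}(\nu,\nu,r)\;=\;\frac{\sinh^2(2r)\sin^2(\lambda/2)}{1+\sinh^2(2r)\sin^2(\lambda/2)},
\end{equation}
independent of $\nu$, while the symmetric calculation already used above gives $\mathcal E=\max(2r-\log\nu,0)$. The map $g:[0,\infty)\to[0,1)$ is continuous and strictly increasing, with $g(\mathcal E/2)={\cal GDS}_A^{(\lambda)}(1,1,\mathcal E/2)$ and $g(r)\to 1$ as $r\to\infty$; by the intermediate value theorem, for every $\Delta\in[{\cal GDS}_A^{(\lambda)}(1,1,\mathcal E/2),1)$ there exists a unique $r^\star\ge\mathcal E/2$ with $g(r^\star)=\Delta$. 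Setting $\nu:=e^{2r^\star-\mathcal E}\ge 1$ then produces a symmetric two-mode squeezed thermal state with entanglement exactly $\mathcal E$ and GDS exactly $\Delta$, as required.
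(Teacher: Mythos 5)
Your item two is complete and is essentially the paper's own argument (symmetric states, GDS independent of $\nu$ and monotone in $|r|$, then solve $2|r^\star|-\log\nu=\mathcal E$ for $\nu\geq1$), and your reduction of the whole proposition to the single quantity $h=\sinh^2(2r)(t_1+t_2)^2/(4t_1t_2)$ is correct. But item one is not proved: for $\nu_1\neq\nu_2$ you only list two candidate strategies (a Lagrange-multiplier analysis on the level sets of $\mathcal E$, or a convexity argument for $h-\sinh^2\mathcal E$) and you explicitly flag that you do not see how to control the $(\nu_1-\nu_2)^2$ terms. That is the entire content of the first bullet, so as it stands the proposal has a genuine gap precisely where the work is.

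The gap closes easily once you decouple the two dependences instead of comparing $h$ and $\sinh^2\mathcal E$ head-on, which is what the paper does. In your notation, $(t_1+t_2)^2\geq 4t_1t_2$ gives $h(\nu_1,\nu_2,r)\geq\sinh^2(2r)$ for free (equivalently: the closed form \eqref{ris: GDS squeezing} is minimized at $A_-=0$, so ${\cal GDS}_A^{(\lambda)}(\nu_1,\nu_2,r)\geq{\cal GDS}_A^{(\lambda)}(1,1,r)$ at fixed $r$). Since ${\cal GDS}_A^{(\lambda)}(1,1,\cdot)$ is increasing in $|r|$, the only thing left to show is the scalar inequality $\mathcal E(\nu_1,\nu_2,r)\leq 2|r|=\mathcal E(1,1,r)$, i.e.\ $\tilde\nu_-\geq e^{-2|r|}$. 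The paper proves this by direct manipulation of the explicit $\tilde\nu_-$: the inequality is equivalent to
\begin{equation}
e^{4r}\bigl[(2\nu_1\nu_2)^2-(\nu_1+\nu_2)^2\bigr]\geq e^{-4r}\bigl[(\nu_1+\nu_2)^2-4\bigr]+2(\nu_1-\nu_2)^2,
\end{equation}
whose left side increases and right side decreases in $r\geq0$, so it suffices to check $r=0$, where it reduces to $1+\nu_1^2\nu_2^2\geq\nu_1^2+\nu_2^2$, i.e.\ $(\nu_1^2-1)(\nu_2^2-1)\geq0$. No optimization over level sets is needed; replacing your open-ended step by this chain would make the proof complete.
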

This means that if entanglement is the quantity in which we are interested in optimizing, then pure states offer the worst performance, whereas mixed states with the same entanglement yield always a greater GDS. 

\begin{figure}
	\centering
	\includegraphics[width=0.45\textwidth]{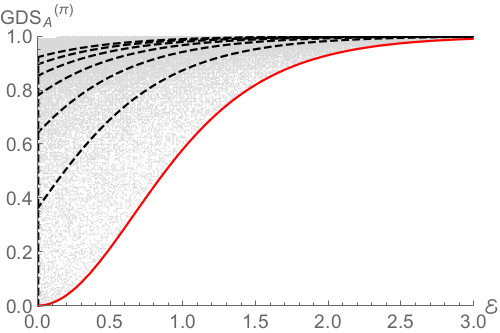}
	\caption{(Color online) GDS with respect to logarithmic negativity $\mathcal E$ for a set of $10^6$ randomly generated two-mode squeezed thermal states with $1\leq \nu_1=\nu_2=\nu \leq 20$ and $0\leq r \leq 5$. The curves corresponding to pure states (red, lower bound) and to symmetric states with $\nu = 2,3,4,5,6,7$ (black dashed lines from bottom to top) are also shown. The plot is realized with $\lambda=\pi$. \label{fig: GE}}
\end{figure}

The situation, however, changes if we take into account also the energy that must be used to produce such thermal excitations. This can be done comparing GDS with the total number of photons $N$ of the state. For a linear mixed thermal state this reads:
\begin{align}
N_{\rm lm} &= \Tr\left[(\aop^\dagger\aop + \bop^\dagger\bop)S_{\rm lm}(\phi)\rho_{\rm th}S^\dagger_{\rm lm}(\phi)\right]\notag\\
&=\Tr\left[(\aop^\dagger\aop + \bop^\dagger\bop)\rho_{\rm th}\right] = \frac{\nu_1+\nu_2}{2} - 1,
\end{align}
where we used the fact that a beam splitter does not change the total number of photons. In the previous expression $\rho_{\rm th}$ is the thermal state with symplectic eigenvalues $\nu_1,\nu_2$, while $S_{\rm lm}(\phi) = e^{-\phi\left(\aop^\dagger\bop -\bop^\dagger\aop\right)}$ is the unitary operation that corresponds to the symplectic matrix ${\bf S}_{\rm lm}(\phi)$ of Eq.~\eqref{def: symp linear mixing} through relations \eqref{eq: Gauss unitary action}. In this case the total number of photons fixes exactly the sum $\nu_1 + \nu_2$, therefore the maximum of GDS \eqref{ris: GDS lin mix} is obtained when the gap between them is as big as possible: $\nu_1 = 2N_{lm} + 1$ and $\nu_2=1$ or viceversa. From this it follows that the maximum amount of correlations that a balanced beam splitter can generate, starting from thermal states with a total number of photons $N$, is given by:
\begin{equation}
GDS_A^{(\lambda)}(\rho_{AB}^{(\rm lm, opt)}) = \frac{N-N\cos\lambda}{2+N-N\cos\lambda}.
\end{equation}
With the optimal choice $\lambda=\pi$, it reduces to the simple expression
\begin{equation}
	GDS_A^{(\pi)}(\rho_{AB}^{(\rm lm, opt)}) = \frac{N}{N+1},
\end{equation}
that is plotted as a blue dot-dashed line in Figure \ref{fig: GN}.

\begin{figure}
	\centering
	\includegraphics[width=0.45\textwidth]{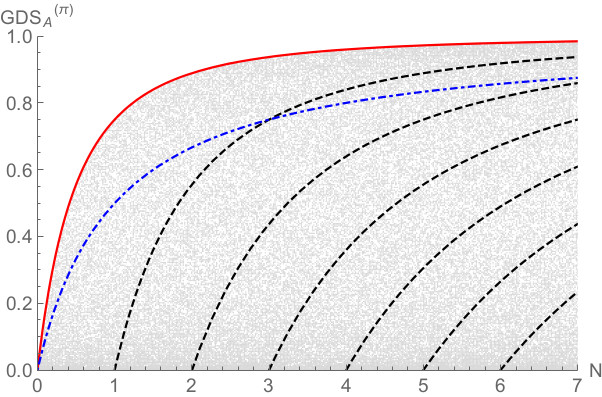}
	\caption{(Color online) Gaussian DS with respect to the total number of photons for a set of $10^6$ randomly generated two-mode squeezed thermal states with $1\leq \nu_1=\nu_2=\nu \leq 8$ and $0\leq r \leq 7$. The curves corresponding to pure states (red, upper bound) and to symmetric states with $\nu = 2,3,4,5,6,7$ (black dashed lines from left to right) are also shown. The blue dot-dashed line that crosses the others corresponds to the optimal GDS achieved only by means of linear mixing of thermal states. The plot is realized with $\lambda=\pi$.\label{fig: GN}}
\end{figure}

In order to do an equivalent calculation for a two-mode squeezed state characterized by the symplectic matrix \eqref{def: symp squeezing}, we need to consider a thermal state evolved with the operator $S_{\rm sq}(r)=e^{r(\aop^\dagger\bop^\dagger-\aop\bop)}$,
that modifies the total number of photons as:
\begin{equation}
S_{\rm sq}^\dagger (\aop^\dagger\aop + \bop^\dagger\bop)S_{\rm sq} = 2\sinh^2 r +\cosh(2r)(\aop^\dagger\aop + \bop^\dagger\bop).
\end{equation}
Therefore in this case the total number of photons depends also upon the squeezing:
\begin{align}
N_{\rm sq} &= \Tr\left[(\aop^\dagger\aop + \bop^\dagger\bop)S_{\rm sq}(r)\rho_{\rm th}S^\dagger_{\rm sq}(r)\right]\notag\\
&=2\sinh^2 r + \cosh(2r)\Tr\left[(\aop^\dagger\aop + \bop^\dagger\bop)\rho_{\rm th}\right] \notag\\
&= \cosh(2r)\frac{\nu_1+\nu_2}{2} - 1,\label{eq: N for squeezed states}
\end{align}
For the case of pure states this expression reduces to $N_{sq}(1,1,r)=2\sinh^2 r$.
Analogously to what have been previously done for entanglement, 
we can numerically study the relation between GDS (with $\lambda=\pi$) and the total number of photons for a set of $10^6$ randomly generated Gaussian states with $1\leq \nu_1=\nu_2 \leq 8$ and $0\leq r \leq 7$. The result is plotted in Figure \ref{fig: GN}, where the behaviour of pure states (red upper line) and symmetric mixed states with $\nu_1=\nu_2$ (black dashed lines) is evidenced. Notice that, differently than before, pure states corresponds to the GDS upper bound for of a two-mode squeezed thermal state, when the number photons is kept fixed. As in the entanglement case, such bound holds also for $\nu_1\neq\nu_2$. Every other value of correlations below the threshold of pure states can instead be achieved. Both these facts are formally stated by the following proposition, that is proven in Appendix \ref{app: ent-N relation}.

\begin{prop}\label{thm: photon number relation}
		Consider ${\cal GDS}_A^{(\lambda)}(\nu_1,\nu_2,r)$ for a two-mode squeezed thermal state, and its total number of photons $N_{sq}(\nu_1,\nu_2,r)$. Then:
		\begin{itemize}
			\item Fixed $N_{sq}$, ${\cal GDS}_A^{(\lambda)}$ is maximized on pure states:
			\begin{align}
 {\cal GDS}_A^{(\lambda)}&(\nu_1,\nu_2,r)\notag \leq\\ & GDS_A^{(\lambda)}\left(1,1,\text{arcsinh}\left(\sqrt{\frac{N_{sq}(\nu_1,\nu_2,r)}{2}}\right)\right);\notag
			\end{align}
			\item For all $N_{sq}\geq0$ and $\Delta$ such that \begin{equation}
			0\leq \Delta\leq {\cal GDS}_A^{(\lambda)}\left(1,1,\text{arcsinh}\left(\sqrt{N_{sq}/2}\right)\right),
			\end{equation}
			there exists a symmetric state with $\nu_A=\nu_B=\nu$ with total number of photons $N_{sq}$ and ${\cal GDS}_A^{(\lambda)} = \Delta$.
		\end{itemize}
\end{prop}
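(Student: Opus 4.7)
The plan is to reduce both statements to an elementary hyperbolic inequality via a convenient change of variables on the symplectic eigenvalues. From Eq.~\eqref{ris: GDS squeezing} I would first rewrite ${\cal GDS}_A^{(\lambda)}=\xi/(1+\xi)$ with
\begin{equation}
\xi=\sinh^2(2r)\sin^2(\lambda/2)\,\frac{(a+b)^2}{4ab},
\end{equation}
where $a=\Lambda_{1/2}(\nu_1)$, $b=\Lambda_{1/2}(\nu_2)$; the algebraic identity $A_+^2-A_-^2=ab$ comes straight from Eq.~\eqref{def: A plus minus}. Since $x\mapsto x/(1+x)$ is monotone, the first bullet reduces to proving
\begin{equation}
\sinh^2(2r)\,\frac{(a+b)^2}{4ab}\;\leq\;(N_{sq}+1)^2-1
\end{equation}
under the constraint $\cosh(2r)(\nu_1+\nu_2)/2=N_{sq}+1$, the right-hand side being $\sinh^2(2r_0)$ at the pure state (where $\cosh(2r_0)=N_{sq}+1$ and $a=b=1$, which saturates the bound).

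The key step is to rationalize Eq.~\eqref{def: lambda function} to obtain the closed form $\Lambda_{1/2}(x)=x+\sqrt{x^2-1}$, and then substitute $\nu_i=\cosh\alpha_i$ with $\alpha_i\geq 0$. This yields $a=e^{\alpha_1}$, $b=e^{\alpha_2}$, and standard hyperbolic identities give $(a+b)^2/(4ab)=\cosh^2 v$ together with $(\nu_1+\nu_2)/2=\cosh u\,\cosh v$, where $u=(\alpha_1+\alpha_2)/2$ and $v=(\alpha_1-\alpha_2)/2$. Writing $M=N_{sq}+1$ and using $\cosh(2r)=M/(\cosh u\cosh v)$, the target inequality collapses to
\begin{equation}
\frac{M^2}{\cosh^2 u}-\cosh^2 v\;\leq\;M^2-1,
\end{equation}
equivalently $-M^2\tanh^2 u\leq\sinh^2 v$. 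This holds trivially because the left-hand side is non-positive while the right-hand side is non-negative, with equality exclusively at $u=v=0$, i.e.~$\nu_1=\nu_2=1$.

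For the existence claim in the second bullet I would restrict to symmetric states $\nu_1=\nu_2=\nu$: on these $A_-=0$ kills the $(A_-/A_+)^2$ term, and the constraint fixes $\cosh(2r)=(N_{sq}+1)/\nu$ for $\nu\in[1,N_{sq}+1]$. As $\nu$ sweeps this interval continuously, $\sinh^2(2r)$ decreases continuously from the pure-state value $(N_{sq}+1)^2-1$ down to $0$, so ${\cal GDS}_A^{(\lambda)}$ spans the entire interval $[0,{\cal GDS}_A^{(\lambda)}(1,1,r_0)]$ and the intermediate value theorem delivers any prescribed $\Delta$. The only non-obvious move in the whole argument is the hyperbolic parametrization $\nu_i=\cosh\alpha_i$: before that change of variables the photon-number constraint and the ratio $(a+b)^2/(4ab)$ stay algebraically entangled and the bound is not manifestly tight at the pure symmetric state; once sum- and difference-angles $u,v$ are decoupled, everything collapses to a one-line sign check.
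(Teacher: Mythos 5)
Your proposal is correct, and for the first bullet it follows a genuinely different route from the paper's. The paper manipulates the claimed bound into the equivalent form \eqref{appeq: long ineq 2}, then \emph{drops} the additive $(A_-/A_+)^2$ term to obtain a strictly stronger, $r$-independent inequality, which after simplification reads $1-\nu_1\nu_2\leq\sqrt{\nu_1^2-1}\sqrt{\nu_2^2-1}$ and is trivially true for $\nu_1,\nu_2\geq 1$. You instead keep the exact inequality and decouple it through the closed form $\Lambda_{1/2}(x)=x+\sqrt{x^2-1}$ and the substitution $\nu_i=\cosh\alpha_i$, after which the constraint $\cosh(2r)(\nu_1+\nu_2)/2=N_{sq}+1$ and the ratio $(a+b)^2/(4ab)=\cosh^2 v$ separate into sum and difference angles and the bound collapses to $-M^2\tanh^2 u\leq\sinh^2 v$. (I checked the intermediate identities: $A_+^2-A_-^2=ab$, $a=e^{\alpha_1}$, $b=e^{\alpha_2}$, and the reduction of the photon-number constraint all hold, and division by $\sin^2(\lambda/2)$ is legitimate since $\lambda$ is not a multiple of $2\pi$.) What your version buys is that no term is discarded, so you get the exact equality characterization $u=v=0$, i.e.\ $\nu_1=\nu_2=1$, for free; what the paper's version buys is that it stays entirely in the original variables $\nu_1,\nu_2,r$ and requires no reparametrization. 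For the second bullet the two arguments are essentially the same restriction to symmetric states with $A_-=0$; you fix $N_{sq}$ and sweep $\nu\in[1,N_{sq}+1]$ invoking continuity, while the paper fixes $\Delta$ (hence $|r|$) and solves $\nu\cosh(2r)=N_{sq}+1$ for $\nu\geq 1$ --- both are valid and equally short.
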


Comparing Proposition \ref{thm: ent relation} with Proposition \ref{thm: photon number relation} we can see that, depending on which resource we take into account, pure states can be either the optimal or the worst possible choice. They indeed allow to reach the maximum amount of correlations when the total number of photons is fixed. However, if thermal excitations are already present in the system, maybe due to some noise, then the mixedness that is introduced increases the non-classical behaviour of the state when the same squeezing parameter or the same amount of entanglement is considered.

Eventually, in Figure \ref{fig: GN} one can see also a comparison between the maximum GDS that can be activated from thermal states, by means of a balanced beam splitter, or via a two-mode squeezing operation leading to the same total number of photons. It can be appreciated how the two-mode squeezed vacuum always performs better, but at the cost of utilizing highly-squeezed light, which is much harder to produce with respect to the combination of thermal light and linear mixing through a balanced beam splitter.

\section{Conclusions} \label{sec: Conclusions}
The susceptibility of a state to the less disturbing local unitary map has been widely used in the literature to quantify the amount of non-classical correlations. In order to obtain a bona-fide measure, which nullifies only on the set of CQ states, such unitary is usually chosen among those operations characterized by a fixed non-degenerate spectrum. In the first part of this paper we studied how such condition is modified when a Gaussian restriction is imposed on both the state and the unitary maps involved in the minimization. In particular, in Sec. \ref{sec: Unitary choice} we showed that, under few very reasonable assumptions, all local unitary operations must be chosen among those that can be obtained from a non-trivial operator of the form $e^{i\sum_{j=1}^{n_A}\lambda_j \aop_j^\dagger\aop_j}$ by means of local Gaussian unitary maps. Although its single-mode version has been assumed several times in the literature, to the best of our knowledge no explicit proof of its necessity was known so far.  
Moreover, we discussed how usually the optimization over the remaining local Gaussian unitary map can be performed only by considering its action on the covariance matrix, without taking into account the possibility of a displacement.

After this general discussion, we focused on a particular measure of non-classical correlations: the Discriminating Strength. With a clear operational meaning, it aims at enlightening the usefulness of correlations in a state discrimination setup. We obtained an expression for its Gaussian version (GDS), defined considering Gaussian input states and restricting the optimization set $\mathcal S$ to a subset of all Gaussianity-preserving unitary maps, as described in the first part of this manuscript. We then computed the obtained quantity on two simple, but still relevant, classes of two-mode states: those obtained from thermal states by means of a linear mixing or a two-mode squeezing operation. Explicit expressions for  GDS have been provided in these cases, in terms of the parameters intervening in their Williamson decomposition, as well as in terms of their standard form defined up to local unitary maps. We showed that given an initial thermal state, the maximum amount of non-classical correlations that can be obtained by mixing linearly the two modes is achieved with a balanced beam splitter, whereas increasing the squeezing parameter always leads to a bigger amount of non-classicality. Furthermore, asymmetric states with a different number of thermal excitations in the two modes, i.e. with symplectic eigenvalues $\nu_1\neq\nu_2$, are always more correlated than their symmetric counterpart.

We compared GDS in a two-mode squeezed state with the amount of entanglement, as measured by the logarithmic negativity. We showed that, for every fixed value of entanglement, pure states offer the worst performance in terms of amount of correlations, whereas mixed states can lead to values of GDS as close to the maximum bound of $1$ as desired. However, this is due to the fact that states with initial strong thermal populations can be highly squeezed despite their small amount of entanglement. Therefore, it is also interesting to compare different states at fixed energy. From the analysis of the relation between GDS and photon number emerged that when the total energy is kept fixed the pure two-mode squeezed vacuum is indeed the option leading to the biggest amount of correlations. Eventually, we showed that mixing different thermal states in a balanced beam splitter still allows the generation of non-classical correlations. Despite the fact that this scenario does not achieve values as high as a two-mode squeezed vacuum, it has the advantage of not requiring highly-squeezed light, which is harder to obtain experimentally with respect to many-photons thermal states.

\section*{Acknowledgements}
We warmly thank G. Adesso for useful comments and discussions. LR acknowledges financial support from the People Programme (Marie Curie Actions) of the European Union's Seventh Framework Programme (FP7/2007-2013) under REA grant agreement n$\degree$ 317232; ADP and VG are supported by the EU Collaborative Project TherMiQ (Grant agreement 618074) and by the EU project COST Action MP1209 ``Thermodynamics in the quantum regime''.

\bibliographystyle{unsrt} 
\bibliography{Bibliography_GDS} 

\begin{thebibliography}{10}

\bibitem{Horodecki_EntReview}
R.~Horodecki, P.~Horodecki, M.~Horodecki, and K.~Horodecki.
\newblock Quantum entanglement.
\newblock {\em Rev. Mod. Phys.}, 81:865--942, Jun 2009.

\bibitem{Modi_Review}
K.~Modi, A.~Brodutch, H.~Cable, T.~Paterek, and V.~Vedral.
\newblock The classical-quantum boundary for correlations: Discord and related
  measures.
\newblock {\em Rev. Mod. Phys.}, 84:1655--1707, Nov 2012.

\bibitem{Zurek_QDiscord}
H.~Ollivier and W.~H. Zurek.
\newblock Quantum discord: A measure of the quantumness of correlations.
\newblock {\em Phys. Rev. Lett.}, 88:017901, Dec 2001.

\bibitem{Vedral_QDiscord}
L.~Henderson and V.~Vedral.
\newblock Classical, quantum and total correlations.
\newblock {\em Journal of Physics A: Mathematical and General}, 34(35):6899,
  2001.

\bibitem{Horodecki_QDeficit}
M.~Horodecki, P.~Horodecki, R.~Horodecki, J.~Oppenheim, A.~Sen, U.~Sen, and
  B.~Synak-Radtke.
\newblock Local versus nonlocal information in quantum-information theory:
  Formalism and phenomena.
\newblock {\em Phys. Rev. A}, 71:062307, Jun 2005.

\bibitem{Modi_RelEntropy}
K.~Modi, T.~Paterek, W.~Son, V.~Vedral, and M.~Williamson.
\newblock Unified view of quantum and classical correlations.
\newblock {\em Phys. Rev. Lett.}, 104:080501, Feb 2010.

\bibitem{Orszag_BuresGeometric}
D.~Spehner and M.~Orszag.
\newblock Geometric quantum discord with bures distance.
\newblock {\em New Journal of Physics}, 15(10):103001, 2013.

\bibitem{Orszag_BuresGQubit}
D.~Spehner and M.~Orszag.
\newblock Geometric quantum discord with bures distance: the qubit case.
\newblock {\em Journal of Physics A: Mathematical and Theoretical},
  47(3):035302, 2014.

\bibitem{Sarandy_Geom1Norm}
F.~M. Paula, T.~R. de~Oliveira, and M.~S. Sarandy.
\newblock Geometric quantum discord through the schatten 1-norm.
\newblock {\em Phys. Rev. A}, 87:064101, Jun 2013.

\bibitem{Giovannetti_TDDQubit}
F.~Ciccarello, T.~Tufarelli, and V.~Giovannetti.
\newblock Toward computability of trace distance discord.
\newblock {\em New Journal of Physics}, 16(1):013038, 2014.

\bibitem{Adesso_NoQ}
T.~Nakano, M.~Piani, and G.~Adesso.
\newblock Negativity of quantumness and its interpretations.
\newblock {\em Phys. Rev. A}, 88:012117, Jul 2013.

\bibitem{Piani_GDiscordProblem}
M.~Piani.
\newblock Problem with geometric discord.
\newblock {\em Phys. Rev. A}, 86:034101, Sep 2012.

\bibitem{Luo_Measurement}
Shunlong Luo and Shuangshuang Fu.
\newblock Measurement-induced nonlocality.
\newblock {\em Phys. Rev. Lett.}, 106:120401, Mar 2011.

\bibitem{Luo_MID}
S.~Luo.
\newblock Using measurement-induced disturbance to characterize correlations as
  classical or quantum.
\newblock {\em Phys. Rev. A}, 77:022301, Feb 2008.

\bibitem{Adesso_GaussMID}
L.~Mi\ifmmode~\check{s}\else \v{s}\fi{}ta, R.~Tatham, D.~Girolami,
  N.~Korolkova, and G.~Adesso.
\newblock Measurement-induced disturbances and nonclassical correlations of
  gaussian states.
\newblock {\em Phys. Rev. A}, 83:042325, Apr 2011.

\bibitem{Fu_NonLocality}
L.~B. Fu.
\newblock Nonlocal effect of a bipartite system induced by local cyclic
  operation.
\newblock {\em EPL (Europhysics Letters)}, 75(1):1, 2006.

\bibitem{Datta_Inequivalence}
A.~Datta and S.~Gharibian.
\newblock Signatures of nonclassicality in mixed-state quantum computation.
\newblock {\em Phys. Rev. A}, 79:042325, Apr 2009.

\bibitem{Girolami_IP}
D.~Girolami, A.~M. Souza, V.~Giovannetti, T.~Tufarelli, J.~G. Filgueiras, R.~S.
  Sarthour, D.~O. Soares-Pinto, I.~S. Oliveira, and G.~Adesso.
\newblock Quantum discord determines the interferometric power of quantum
  states.
\newblock {\em Phys. Rev. Lett.}, 112:210401, May 2014.

\bibitem{Paris_QFI}
M.~G.~A. Paris.
\newblock Quantum estimation for quantum technology.
\newblock {\em International Journal of Quantum Information},
  07(supp01):125--137, 2009.

\bibitem{DS}
A.~Farace, A.~De Pasquale, L.~Rigovacca, and V.~Giovannetti.
\newblock Discriminating {S}trength: a bona fide measure of non-classical
  correlations.
\newblock {\em New Journal of Physics}, 16(7):073010, 2014.

\bibitem{Audenaert_QCB}
K.~M.~R. Audenaert, J.~Calsamiglia, R.~Mu\~noz Tapia, E.~Bagan, Ll. Masanes,
  A.~Acin, and F.~Verstraete.
\newblock Discriminating states: The quantum chernoff bound.
\newblock {\em Phys. Rev. Lett.}, 98:160501, Apr 2007.

\bibitem{Illuminati_DoR}
W.~Roga, S.~M. Giampaolo, and F.~Illuminati.
\newblock Discord of response.
\newblock {\em Journal of Physics A: Mathematical and Theoretical},
  47(36):365301, 2014.

\bibitem{Ferraro_Review}
A.~Ferraro, S.~Olivares, and {M. G. A.} Paris.
\newblock {\em Gaussian states in continuous variable quantum information}.
\newblock Napoli Series on Physics and Astrophysics (ed. Bibliopolis, Napoli),
  2005.

\bibitem{Weedbrook_GaussianReview}
C.~Weedbrook, S.~Pirandola, R.~Garc\'ia-Patr\'on, N.~J. Cerf, T.~C. Ralph,
  J.~H. Shapiro, and S.~Lloyd.
\newblock Gaussian quantum information.
\newblock {\em Rev. Mod. Phys.}, 84:621--669, May 2012.

\bibitem{Wang_Review}
X.~B. Wang, T.~Hiroshima, A.~Tomita, and M.~Hayashi.
\newblock Quantum information with gaussian states.
\newblock {\em Physics Reports}, 448(1–4):1 -- 111, 2007.

\bibitem{Paris_GaussianDiscord}
P.~Giorda and M.~G.~A. Paris.
\newblock Gaussian quantum discord.
\newblock {\em Phys. Rev. Lett.}, 105:020503, Jul 2010.

\bibitem{Adesso_GaussianDiscord}
G.~Adesso and D.~Girolami.
\newblock Gaussian geometric discord.
\newblock {\em International Journal of Quantum Information},
  09(07n08):1773--1786, 2011.

\bibitem{Adesso_GaussianDiscordOld}
G.~Adesso and A.~Datta.
\newblock Quantum versus classical correlations in gaussian states.
\newblock {\em Phys. Rev. Lett.}, 105:030501, Jul 2010.

\bibitem{Giovannetti_GaussConjecture1}
A.~Mari, V.~Giovannetti, and A.~S. Holevo.
\newblock Quantum state majorization at the output of bosonic gaussian
  channels.
\newblock {\em Nature Communications}, 5:3826, May 2014.

\bibitem{Giovannetti_GaussConjecture2}
V.~Giovannetti, R.~Garc\'{\i}a-Patr\'{o}n, N.~J. Cerf, and A.~S. Holevo.
\newblock Ultimate classical communication rates of quantum optical channels.
\newblock {\em Nature Photonics}, 8:796–800, Sept 2014.

\bibitem{Pirandola_OptimalityGaussDiscord}
S.~Pirandola, G.~Spedalieri, S.~L. Braunstein, N.~J. Cerf, and S.~Lloyd.
\newblock Optimality of gaussian discord.
\newblock {\em Phys. Rev. Lett.}, 113:140405, Oct 2014.

\bibitem{Adesso_GaussianIP}
G.~Adesso.
\newblock Gaussian interferometric power.
\newblock {\em Phys. Rev. A}, 90:022321, Aug 2014.

\bibitem{Illuminati_GaussianDoR}
W.~Roga, D.~Buono, and F.~Illuminati.
\newblock Device-independent quantum reading and noise-assisted quantum
  transmitters.
\newblock {\em New Journal of Physics}, 17(1):013031, 2015.

\bibitem{Williamson}
J.~Williamson.
\newblock On the algebraic problem concerning the normal forms of linear
  dynamical systems.
\newblock {\em American Journal of Mathematics}, 58(1):141--163, 1936.

\bibitem{Rahimi-Keshari_DiscordVerification}
S.~Rahimi-Keshari, C.~M. Caves, and T.~C. Ralph.
\newblock Measurement-based method for verifying quantum discord.
\newblock {\em Phys. Rev. A}, 87:012119, Jan 2013.

\bibitem{Adesso_CQGaussian}
L.~Mi\ifmmode~\check{s}\else \v{s}\fi{}ta, D.~McNulty, and G.~Adesso.
\newblock No-activation theorem for gaussian nonclassical correlations by
  gaussian operations.
\newblock {\em Phys. Rev. A}, 90:022328, Aug 2014.

\bibitem{Pirandola_LambdaGFunctions}
S.~Pirandola and S.~Lloyd.
\newblock Computable bounds for the discrimination of gaussian states.
\newblock {\em Phys. Rev. A}, 78:012331, Jul 2008.

\bibitem{Marian_Fidelity}
P.~Marian and T.~A. Marian.
\newblock Uhlmann fidelity between two-mode gaussian states.
\newblock {\em Phys. Rev. A}, 86:022340, Aug 2012.

\bibitem{EulerDec_Book}
Arvind, B~Dutta, N~Mukunda, and R~Simon.
\newblock The real symplectic groups in quantum mechanics and optics.
\newblock {\em Pramana}, 45(6):471--497, 1995.

\bibitem{NormalForm1}
L.M. Duan, G.~Giedke, J.~I. Cirac, and P.~Zoller.
\newblock Inseparability criterion for continuous variable systems.
\newblock {\em Phys. Rev. Lett.}, 84:2722--2725, Mar 2000.

\bibitem{NormalForm2}
R.~Simon.
\newblock Peres-horodecki separability criterion for continuous variable
  systems.
\newblock {\em Phys. Rev. Lett.}, 84:2726--2729, Mar 2000.

\bibitem{Serafini_SympInvariants}
A.~Serafini, F.~Illuminati, and S.~De Siena.
\newblock Symplectic invariants, entropic measures and correlations of gaussian
  states.
\newblock {\em Journal of Physics B: Atomic, Molecular and Optical Physics},
  37(2):L21, 2004.

\bibitem{Adesso_SympInvariants}
G.~Adesso, A.~Serafini, and F.~Illuminati.
\newblock Extremal entanglement and mixedness in continuous variable systems.
\newblock {\em Phys. Rev. A}, 70:022318, Aug 2004.

\bibitem{Paris_BSCorrelations}
M.~Brunelli, C.~Benedetti, S.~Olivares, A.~Ferraro, and M.~G.~A. Paris.
\newblock Single- and two-mode quantumness at a beam splitter.
\newblock {\em Phys. Rev. A}, 91:062315, Jun 2015.

\bibitem{Kim_BSCorrelations1}
M.~S. Kim, W.~Son, V.~Bu\ifmmode~\check{z}\else \v{z}\fi{}ek, and P.~L. Knight.
\newblock Entanglement by a beam splitter: Nonclassicality as a prerequisite
  for entanglement.
\newblock {\em Phys. Rev. A}, 65:032323, Feb 2002.

\bibitem{Kim_BSCorrelations2}
S.~C. Springer, J.~Lee, M.~Bellini, and M.~S. Kim.
\newblock Conditions for factorizable output from a beam splitter.
\newblock {\em Phys. Rev. A}, 79:062303, Jun 2009.

\bibitem{Vidal_LogNeg}
G.~Vidal and R.~F. Werner.
\newblock Computable measure of entanglement.
\newblock {\em Phys. Rev. A}, 65:032314, Feb 2002.

\bibitem{Plenio_LogNeg}
M.~B. Plenio.
\newblock Logarithmic negativity: A full entanglement monotone that is not
  convex.
\newblock {\em Phys. Rev. Lett.}, 95:090503, Aug 2005.

\bibitem{Book_Det_Estimation}
C.~H. Helstrom.
\newblock {\em Quantum Detection and Estimation Theory}.
\newblock Academic Press (February 11, 1976).

\bibitem{Lloyd_Illumination}
S.~Lloyd.
\newblock Enhanced sensitivity of photodetection via quantum illumination.
\newblock {\em Science}, 321(5895):1463--1465, 2008.

\bibitem{Giovannetti_GaussianIllumination}
S.~H. Tan, B.~I. Erkmen, V.~Giovannetti, S.~Guha, S.~Lloyd, L.~Maccone,
  S.~Pirandola, and J.~H. Shapiro.
\newblock Quantum illumination with gaussian states.
\newblock {\em Phys. Rev. Lett.}, 101:253601, Dec 2008.

\bibitem{Lloyd_Illumination2}
J.~H. Shapiro and S.~Lloyd.
\newblock Quantum illumination versus coherent-state target detection.
\newblock {\em New Journal of Physics}, 11(6):063045, 2009.

\bibitem{Guha_GaussianIllumination}
S.~Guha and B.~I. Erkmen.
\newblock Gaussian-state quantum-illumination receivers for target detection.
\newblock {\em Phys. Rev. A}, 80:052310, Nov 2009.

\bibitem{Dariano_ICPOVM1}
G.~M. D’Ariano, P.~Perinotti, and M.~F. Sacchi.
\newblock Informationally complete measurements and group representation.
\newblock {\em Journal of Optics B: Quantum and Semiclassical Optics},
  6(6):S487, 2004.

\bibitem{Scott_IPOVM2}
A.~J. Scott.
\newblock Tight informationally complete quantum measurements.
\newblock {\em Journal of Physics A: Mathematical and General}, 39(43):13507,
  2006.

\bibitem{PhysRevA.66.032316}
G\'eza G. and J.~I.~Cirac.
\newblock Characterization of gaussian operations and distillation of gaussian
  states.
\newblock {\em Phys. Rev. A}, 66:032316, Sep 2002.

\bibitem{Book_DeGosson}
M.~A. De~Gosson.
\newblock {\em Symplectic Methods in Harmonic Analysis and in Mathematical
  Physics}.
\newblock Basel: Birkh\"{a}user, 2011.

\end{thebibliography}

\newpage
\appendix

 \section{State-discrimination protocol defining the DS}\label{app: DS protocl}
 In this section we review the state-discrimination protocol that intervenes in the definition of the measure of non-classical correlations known as DS introduced in~\cite{DS}. Before going into details, it is worth to recall the Quantum Chernoff Bound (QCB) \cite{Audenaert_QCB}, that describes the probability of error when discriminating between two known quantum states $\rho_0$ and $\rho_1$, available in $M$ copies. Optimizing over all Positive Operator Valued Measurements (POVM), if there is a $50 \%$ a priori probability of having $\rho_0^{\otimes M}$ or $\rho_1^{\otimes M}$ one has \cite{Book_Det_Estimation}:
 \begin{equation}
 P_{err,min}^{(M)}(\rho_0,\rho_1)=\frac{1}{2}\left(1-||\rho_0^{\otimes M} - \rho_1^{\otimes M}||_1\right),
 \end{equation} 
 corresponding to an optimal strategy that discriminates between the positive and negative eigenspaces of the difference $\rho_0^{\otimes M} - \rho_1^{\otimes M}$. In the limit $M \gg 1$ such probability scales exponentially in the number of copies:
 \begin{equation}
 P_{err,min}^{(M)}(\rho_0,\rho_1)\simeq e^{-M \xi(\rho_0,\rho_1)},
 \end{equation}
 where we defined
 \begin{equation}
 \xi(\rho_0,\rho_1) = -\lim_{M\to \infty}\frac{\log P_{err,min}^{(M)}(\rho_0,\rho_1)}{M}.
 \end{equation}
 Such limit intervenes in the definition of the QCB \cite{Audenaert_QCB}, that appears in Sec. \ref{sec: Gaussian DS} of the main text:
 \begin{equation}\label{def: appQCB}
 e^{-\xi(\rho_0,\rho_1)} = Q(\rho_0,\rho_1) = \min_{s\in[0,1]}\Tr\left[\rho_0^s\,\rho_1^{1-s}\right].
 \end{equation}
 Notice that in particular the QCB can be bounded as:
 \begin{equation}
 0\leq Q(\rho_0,\rho_1) \leq \Tr\left[\rho_0^{1/2}\,\rho_1^{1/2}\right] \leq 1.
 \end{equation}
 
 The DS is based on a quantum illumination scenario \cite{Lloyd_Illumination,Giovannetti_GaussianIllumination,Lloyd_Illumination2,Guha_GaussianIllumination}, where one is interested in obtaining informations about an environment by probing it with a bipartite state: the actual probe $A$ and a reference $B$. In particular the DS considers a situation in which one is interested in detecting the possible application of a local unitary operation performed by a third party, that is only partially cooperative. The protocol can be thought as follows (see Figure \ref{fig: DS scheme}): Alice and Bob choose the initial probing state $\rho_{AB}$, and produce $M$ copies of it. Alice's local state is then sent to a third party Charlie, who can apply or not (with $50 \%$ probability) a unitary operation $U_A$ of his choice. Charlie then communicates $U_A$ (without saying whether he applied it or not) and gives back to Alice the states, on which she can perform a generic joint POVM on the whole system $AB$ with the help of Bob. Their aim is to discriminate between $\rho_{AB}$ and $U_A\rho_{AB}U_A^\dagger$, in order to detect whether the local rotation actually took place or not. 
 
 \begin{figure}
 	\centering
 	\includegraphics[width=0.45\textwidth]{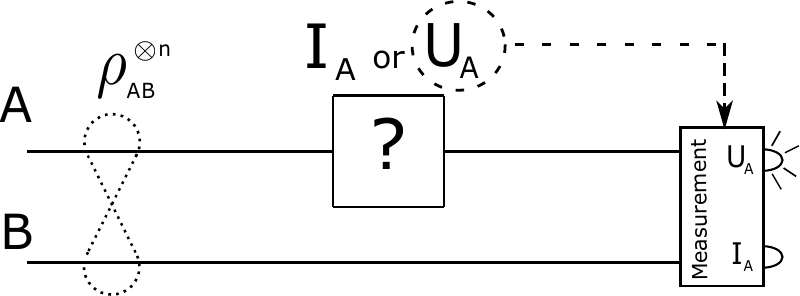}
 	\caption{Scheme for the state discrimination protocol on which the Discriminating Strength is defined. Alice and Bob prepare $n$ copies of a state, without knowing which local unitary operation $U_A$ will be applied in the following. Alice's subsystems are then modified locally by a third party (Charlie), that can choose with the same probability to: $(i)$ leave them unchanged or $(ii)$ to apply a local rotation $U_A\in \mathcal S$ of his choice. The picked unitary is then communicated to Alice and Bob, that can use this information to perform the optimal joint measurement to detect whether Charlie chose option $(i)$ or option $(ii)$. \label{fig: DS scheme}}
 \end{figure}
 
 Notice that the chosen $U_A$, not known at the state preparation stage, is eventually revealed in order to allow Alice and Bob to perform the best possible measurement, whose outcome can be described by the QCB. This guarantees that the worst-case success probability only depends on the initial chosen state $\rho_{AB}$ and on the set $\mathcal S$ of the allowed unitary operations $U_A$.
 As can be expected, the presence of non-classical correlations in the initial state $\rho_{AB}$ turns out to be useful. Consider for example the structure of a CQ (uncorrelated) state \eqref{def: CQ state}: indeed, if Charlie chooses a unitary operation diagonal in the same basis $\{\ket{i}_A\}_i$ in which the CQ state is expanded, then the whole state $\rho_{AB}$ is left unmodified and Alice and Bob have no way to detect the rotation. The more the initial state $\rho_{AB}$ is susceptible to local alterations, the higher the probability of success of Alice and Bob will be.
 Therefore, the Discriminating Strength of the initial state $\rho_{AB}$ defined in Eq. \eqref{def: DS} of the main text quantifies their worst performance in such discrimination task.

 In a previous work \cite{DS} it was shown how the DS is a bona-fide measure of non-classical correlations, satisfying the four properties stated in Sec. \ref{sec: Intro}, when $\mathcal S$ is chosen to be the set of all local unitary operations with a fixed and non-degenerate spectrum.

\section{CQ Gaussian states are completely uncorrelated} \label{app: CQ gaussian states} 
 
In this appendix we rederive that only completely uncorrelated Gaussian states, i.e. decomposed as $\rho_{A}\otimes\rho_B$, are CQ (i.e. with a zero Quantum Discord, see Sec. \ref{sec: Intro}), result that has been firstly obtained in \cite{Adesso_CQGaussian}. The discussion will follow the ideas of \cite{Rahimi-Keshari_DiscordVerification}, where only the two-mode case is considered.

The core of the proof relies on a result involving informationally-complete POVM (IC-POVM) \cite{Dariano_ICPOVM1,Scott_IPOVM2}. A POVM $\{M_k\}_k$, with $\sum_k M_k = \Id$, is said to be complete if it allows to reconstruct the state, i.e. if there exist operators $N_k$ such that $\sigma = \sum_k N_k \Tr[\sigma M_k]$, for every density matrix $\sigma$ of the system being measured.
If we perform an IC-POVM on the subsystem $B$ of a bipartite state $\rho_{AB}$, we are left with the conditional states:
\begin{equation}
\rho_{A|k} = \frac{\Tr_B[\rho_{AB} M_k]}{p_k}, \qquad p_k = \Tr[\rho_{AB} M_k].
\end{equation}
Rahimi-Keshari et al. in \cite{Rahimi-Keshari_DiscordVerification} have proven the following statement.
\begin{approp}\label{RK prop}
	The necessary and sufficient condition for a bipartite system $\rho_{AB}$ to be CQ is that, for every IC-POVM $\{M_k\}_k$ performed on subsystem $B$, the conditional states $\{\rho_{A|k}\}_k$ commute with one another, i.e.\begin{equation}
	[\rho_{A|k},\rho_{A|k^\prime}] = 0, \quad\forall k,k^\prime.
	\end{equation}
\end{approp}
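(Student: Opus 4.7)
The plan is to prove the biconditional directly, treating the two implications separately; the substantive content lies in the backward direction, where the informational completeness of the POVM is used in an essential way.

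For the forward direction, I would start from a CQ decomposition $\rho_{AB}=\sum_i p_i\,\ketbras{i}{i}{A}\otimes\rho_B^{(i)}$ and substitute into the definition of the conditional states. A short computation yields
\begin{equation*}
\rho_{A|k}=\frac{1}{p_k}\sum_i p_i\,\Tr[\rho_B^{(i)}M_k]\,\ketbras{i}{i}{A},
\end{equation*}
so every $\rho_{A|k}$ is diagonal in the fixed orthonormal basis $\{\ket{i}_A\}$, and the whole family therefore pairwise commutes. Notice that this argument does not even require the $\{M_k\}$ to be IC; any POVM on $B$ will do.

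For the backward direction, I would fix a single IC-POVM $\{M_k\}$ on $B$ and, using the commutation hypothesis, simultaneously diagonalize the mutually commuting self-adjoint family $\{\rho_{A|k}\}_k$ in a common orthonormal basis $\{\ket{i}_A\}$, obtaining $\rho_{A|k}=\sum_i q_{i|k}\,\ketbras{i}{i}{A}$. Introducing the partial matrix-element operators $\sigma_B^{(ij)}:=\bra{i}_A\,\rho_{AB}\,\ket{j}_A\in\mathcal{B}(\mathcal{H}_B)$, a direct computation shows that $\langle i|\rho_{A|k}|j\rangle_A=\Tr[\sigma_B^{(ij)}M_k]/p_k$, so the vanishing of the off-diagonal entries in the common eigenbasis translates into
\begin{equation*}
\Tr[\sigma_B^{(ij)}M_k]=0,\qquad\forall\,k,\ \forall\,i\neq j.
\end{equation*}
Informational completeness says precisely that $\{M_k\}$ spans $\mathcal{B}(\mathcal{H}_B)$ as a vector space, so the Hilbert--Schmidt functionals $X\mapsto\Tr[X M_k]$ collectively separate operators; this forces $\sigma_B^{(ij)}=0$ for $i\neq j$, leaving $\rho_{AB}=\sum_i\ketbras{i}{i}{A}\otimes\sigma_B^{(ii)}$, which after normalization $p_i:=\Tr[\sigma_B^{(ii)}]$, $\rho_B^{(i)}:=\sigma_B^{(ii)}/p_i$ is manifestly a CQ state.

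The step I expect to be the crux of the argument is this last one: converting a basis-level statement about the reduced conditional states into a genuine structural (block-diagonal) statement about $\rho_{AB}$ itself. This is where IC is consumed, turning countably many scalar moment identities into the operator equation $\sigma_B^{(ij)}=0$; without informational completeness, one could only conclude that each $\sigma_B^{(ij)}$ is Hilbert--Schmidt orthogonal to the span of the POVM, which is strictly weaker than vanishing. As a sanity check, although the proposition's hypothesis quantifies over every IC-POVM, the backward implication actually uses only a single one; the universal quantifier is exploited exclusively in the forward direction, where it must be verified for all possible IC-POVM choices.
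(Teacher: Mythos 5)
Your proof is correct, but note that the paper itself does not prove this proposition: it is imported verbatim from Rahimi-Keshari et al.~\cite{Rahimi-Keshari_DiscordVerification} and used as a black box, so there is no in-paper argument to compare against. What you have written is essentially the standard proof from that reference: the forward direction by direct substitution into the CQ form (correctly observing that informational completeness is not needed there), and the backward direction by simultaneously diagonalizing the commuting family, reducing the hypothesis to $\Tr[\sigma_B^{(ij)}M_k]=0$ for all $k$ and $i\neq j$, and invoking informational completeness to conclude $\sigma_B^{(ij)}=0$. One caveat is worth recording, because the paper applies the proposition to continuous-variable Gaussian states with a \emph{continuous} IC-POVM (heterodyne detection): your phrase ``spans $\mathcal{B}(\mathcal{H}_B)$ as a vector space'' is the finite-dimensional formulation of informational completeness, and the sums over $k$ and over the basis $\{\ket{i}_A\}$ become integrals and infinite series. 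The argument survives in that setting --- the correct statement of IC, matching the paper's own reconstruction-formula definition, is that the functionals $X\mapsto\Tr[XM_k]$ separate trace-class operators, and your $\sigma_B^{(ij)}$ are trace-class; moreover the conditional states are compact self-adjoint operators, so a common orthonormal eigenbasis of the commuting family exists --- but as written your proof reads as finite-dimensional, and these points should be made explicit if the result is to cover the regime in which the paper actually uses it. (A further minor point: for outcomes with $p_k=0$ the conditional state is undefined, but positivity of $\rho_{AB}$ and $M_k$ still forces $\Tr[\sigma_B^{(ij)}M_k]=0$ there, so no information is lost.)
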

Based on this result, in order to prove that a CQ state $\rho \in \mathfrak G$ has to be completely uncorrelated, we can just prove that if its covariance matrix has some correlations $\bGamma_{OFF} \neq \bz$ [see \eqref{eq: covariance matrix blocks}] then there exist an IC-POVM leading to two conditional states $\rho_{A|k}$, $\rho_{A|k^\prime}$ that do not commute.
In the following we will consider a particular IC-POVM for continuous variable systems: the multi-mode heterodyne detection, that projects on coherent states
\begin{equation}
M_{\bbeta} = \frac{1}{\pi^{n_B}}\ketbras{\bbeta}{\bbeta}{B},
\end{equation}
with $\bbeta\in\mathbb{C}^{n_B}$. Notice that the completeness follows from the correspondence between quantum states and quasi-probability distributions on phase space, that can be reconstructed from such measurement.

To describe the conditional state $\rho_{A|\bbeta}$ after a heterodyne measurement on subsystem $B$, we will use the following well known result on Gaussian POVMs \cite{PhysRevA.66.032316,Weedbrook_GaussianReview}:
\begin{approp}
	Let us consider a bipartite Gaussian state $\rho_{AB}$ described by a covariance matrix and a displacement vector given by:
	\begin{equation}
	\bGamma_{AB} = \left(\begin{array}{c|c}
	\bGamma_A & \bGamma_{OFF}\\\hline
	\bGamma_{OFF}^\intercal & \bGamma_B
	\end{array}\right), 
	\qquad \bxi = \left(\begin{array}{c}
	\bxi_A\\
	\bxi_B
	\end{array}\right),
	\end{equation}
	whose subsystem $B$ is projected on a Gaussian state described by $\bGamma_m$ and $\bxi_m$. After the measurement, the state left in $A$ has covariance matrix and displacement vector given by:
	\begin{align}
	\bGamma_{A|m} &= \bGamma_A - \bGamma_{OFF}\frac{1}{\bGamma_B + \bGamma_m}\bGamma_{OFF}^\intercal, \\
	\bxi_{A|m} &= \bxi_A - \bGamma_{OFF}\frac{1}{\bGamma_B + \bGamma_m}\left(\bxi_B - \bxi_m\right).
	\end{align}
\end{approp}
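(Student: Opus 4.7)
The plan is to compute the conditional state on $A$ via the phase-space representation. The Gaussian POVM element on $B$ is, up to a positive normalization, proportional to the density operator of a Gaussian state with covariance $\bGamma_m$ and displacement $\bxi_m$, hence its Wigner function $W_m(\mathbf{r}_B)$ is Gaussian. By the standard overlap identity relating $\Tr[\rho\sigma]$ to the phase-space integral of the product of Wigner functions, the post-measurement conditional Wigner function on $A$ is
\begin{equation*}
W_{A|m}(\mathbf{r}_A)\;\propto\;\int d\mathbf{r}_B\,W_{AB}(\mathbf{r}_A,\mathbf{r}_B)\,W_m(\mathbf{r}_B),
\end{equation*}
with the proportionality constant fixed by the probability of the outcome. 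Both factors are multivariate Gaussians, so the marginal is Gaussian too, and the proof reduces to identifying its covariance matrix and mean with the expressions in the statement.

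\textbf{Gaussian integration.} I would expand the two exponents, collect them into a single quadratic form in $(\mathbf{r}_A,\mathbf{r}_B)$, and complete the square in $\mathbf{r}_B$. The $\mathbf{r}_B$-$\mathbf{r}_B$ block of the combined quadratic kernel is $[\bGamma_{AB}^{-1}]_{BB}+\bGamma_m^{-1}$, the off-diagonal coupling is $[\bGamma_{AB}^{-1}]_{AB}$, and the linear terms come from $\bGamma_{AB}^{-1}\bxi$ and $\bGamma_m^{-1}\bxi_m$. After performing the Gaussian integral in $\mathbf{r}_B$, the surviving $\mathbf{r}_A$-dependent Gaussian has inverse covariance
\begin{equation*}
[\bGamma_{AB}^{-1}]_{AA}-[\bGamma_{AB}^{-1}]_{AB}\big([\bGamma_{AB}^{-1}]_{BB}+\bGamma_m^{-1}\big)^{-1}[\bGamma_{AB}^{-1}]_{BA},
\end{equation*}
and a displacement that receives a shift sourced by $\bxi_B-\bxi_m$ through the same off-diagonal blocks.

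\textbf{Block-matrix repackaging.} The remaining, purely algebraic step is to rewrite this in terms of $\bGamma_A$, $\bGamma_B$, $\bGamma_{OFF}$, $\bGamma_m$ only. Applying the standard block-inverse identities $[\bGamma_{AB}^{-1}]_{BB}=(\bGamma_B-\bGamma_{OFF}^{\intercal}\bGamma_A^{-1}\bGamma_{OFF})^{-1}$ and $[\bGamma_{AB}^{-1}]_{AB}=-\bGamma_A^{-1}\bGamma_{OFF}[\bGamma_{AB}^{-1}]_{BB}$, together with the push-through identity $(X^{-1}+Y^{-1})^{-1}=X(X+Y)^{-1}Y$, collapses the combination of nested inverses into $(\bGamma_B+\bGamma_m)^{-1}$ flanked by $\bGamma_{OFF}$ and $\bGamma_{OFF}^{\intercal}$. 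A final use of the Woodbury identity inverts the resulting precision matrix to give $\bGamma_{A|m}=\bGamma_A-\bGamma_{OFF}(\bGamma_B+\bGamma_m)^{-1}\bGamma_{OFF}^{\intercal}$, and the parallel manipulation on the linear piece yields $\bxi_{A|m}=\bxi_A-\bGamma_{OFF}(\bGamma_B+\bGamma_m)^{-1}(\bxi_B-\bxi_m)$.

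\textbf{Main obstacle.} The technical bottleneck is exactly this block-matrix bookkeeping: keeping track of the three Schur complements (of $\bGamma_{AB}$, of $\bGamma_{AB}^{-1}+0_A\oplus\bGamma_m^{-1}$, and finally of $\bGamma_B+\bGamma_m$) and showing that they combine cleanly, rather than leaving expressions with four nested inverses that obscure the Schur-complement structure. Positivity of $\bGamma_B+\bGamma_m$, needed for every inverse to be well-defined, is automatic because each of $\bGamma_B$ and $\bGamma_m$ satisfies the uncertainty relation $\bGamma+i\bOmega_n\geq 0$ and is therefore strictly positive.
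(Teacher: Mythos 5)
Your outline is correct, but there is nothing in the paper to compare it against: the paper does not prove this proposition at all, it imports it as ``the following well known result on Gaussian POVMs'' with citations to the literature. What you sketch is essentially the standard derivation underlying those references. The overlap formula $W_{A|m}(\mathbf{r}_A)\propto\int d\mathbf{r}_B\,W_{AB}(\mathbf{r}_A,\mathbf{r}_B)\,W_m(\mathbf{r}_B)$ is the right starting point, your identification of the marginal precision as the Schur complement $[\bGamma_{AB}^{-1}]_{AA}-[\bGamma_{AB}^{-1}]_{AB}\bigl([\bGamma_{AB}^{-1}]_{BB}+\bGamma_m^{-1}\bigr)^{-1}[\bGamma_{AB}^{-1}]_{BA}$ is correct, and the stated formulas do follow. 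One reorganization removes the bookkeeping you flag as the main obstacle: the combined quadratic form has precision matrix $\bGamma_{AB}^{-1}+E_B\bGamma_m^{-1}E_B^{\intercal}$, with $E_B$ the embedding of the $B$ block, and the covariance of the $\mathbf{r}_A$ marginal is just the $AA$ block of its inverse; a single application of the Woodbury identity gives $\bigl(\bGamma_{AB}^{-1}+E_B\bGamma_m^{-1}E_B^{\intercal}\bigr)^{-1}=\bGamma_{AB}-\bGamma_{AB}E_B(\bGamma_B+\bGamma_m)^{-1}E_B^{\intercal}\bGamma_{AB}$, whose $AA$ block is $\bGamma_A-\bGamma_{OFF}(\bGamma_B+\bGamma_m)^{-1}\bGamma_{OFF}^{\intercal}$ immediately, and applying the same matrix to the linear term $\bGamma_{AB}^{-1}\bxi+E_B\bGamma_m^{-1}\bxi_m$ yields $\bxi_A-\bGamma_{OFF}(\bGamma_B+\bGamma_m)^{-1}(\bxi_B-\bxi_m)$ in one line, so the three nested Schur complements never appear. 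Your positivity remark is fine, since $\bGamma+i\bOmega_n\geq 0$ forces $\bGamma>0$ via the Williamson decomposition. The only point worth making explicit is the normalization convention: with the paper's anticommutator definition of $\bGamma$ (vacuum equal to $\Id$), the Wigner function is $\propto e^{-(\mathbf{r}-\bxi)^{\intercal}\bGamma^{-1}(\mathbf{r}-\bxi)}$, which is exactly what makes the combination $\bGamma_B+\bGamma_m$, with no factor of $1/2$, emerge from the convolution.
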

In our case, up to local Gaussian operators (that do not alter the amount of correlations), we can always choose $\bxi_A = \bxi_B = \bz$. Moreover, in the heterodyne detection $\bGamma_m = \bId_B$ and $\bxi_m = \bbeta \in \mathbb{C}^{n_B}$. The conditional state $\rho_{A|\bbeta}$ will be therefore characterized by
\begin{align}
\bGamma_{A|\bbeta} &= \bGamma_A - \bGamma_{OFF}\frac{1}{\bGamma_B + \bId_B}\bGamma_{OFF}^\intercal,\\
\bxi_{A|\bbeta} &= \bGamma_{OFF}\frac{1}{\bGamma_B + \bId_B}\,\bbeta.
\end{align}
Notice that being $\bGamma_{A|\bbeta}$ independent on $\bbeta$, all the conditional states $\rho_{A|\bbeta}$ differ only for their first moments. Since $\bGamma_{OFF} \neq \bz$ by hypothesis and being $\bGamma_B + \bId_B \in GL(2 n_B)$, there exists a vector $\bbeta_0$ such that $\bxi_{A|\bbeta_0} \neq 0$, whereas choosing $\bbeta = \bz$ one has $\bxi_{A|\bz} = 0$. In the following we will focus on the couple of states $\rho_{A|\bz}$ and $\rho_{A|\bbeta_0}$, proving that they do not commute. 

To do so, we will use the fact that the product $\rho_1\rho_2$ of two states $\rho_1,\rho_2 \in \mathfrak G$ still has a Gaussian characteristic function (possibly with non-real covariance matrix $\bGamma_{12}$) \cite{Marian_Fidelity}:
\begin{equation}\label{ch funxtion}
\chi_{\rho_1\rho_2}(\bu) = \chi(\bz)e^{-\frac{1}{4}\bu^\intercal \bGamma_{12} \bu -i\bxi_{12}^\intercal \bu}.
\end{equation}
In particular, $\bxi_{12}$ can be written as \cite{Marian_Fidelity}:
\begin{equation}\label{product displacement}
\bxi_{12} = \bxi_1 - (\bGamma_1 - i\bOmega)\frac{1}{\bGamma_1+\bGamma_2}(\bxi_1 - \bxi_2),
\end{equation}
where $\bxi_i$ and $\bGamma_i$, $i = 1,2$, are respectively the displacements and covariance matrices of $\rho_1$, $\rho_2$ and $\bOmega$ is the symplectic form \eqref{def: symplectic form}.
Condition $[\rho_1,\rho_2] = 0$ implies $ \bxi_{12} = \bxi_{21}$, because to different characteristic functions correspond different quantum states. From \eqref{product displacement} and its counterpart for $\bxi_{21}$, such equality is equivalent to $\bxi_1 = \bxi_2$. 

In conclusion, given $\rho_{AB} \in \mathfrak G$ with $\bGamma_{OFF} \neq \bz$, we found a continuous variable IC-POVM (heterodyne detection) yielding two conditional states $\rho_{A|\bbeta_0}$ and $\rho_{A|\bz}$ characterized by different displacements $\bxi_{A|\bbeta_0}\neq \bxi_{A|\bz}$. Since this implies their non-commutativity, from Proposition \eqref{RK prop} it follows that $\rho_{AB}$ cannot be CQ. Conversely, if a Gaussian state $\rho_{AB} = \rho_{A} \otimes \rho_{B}$ is completely uncorrelated, it can be explicitly written in a CQ form \eqref{def: CQ state} by diagonalizing its local components $\rho_A$ and $\rho_B$.
 
\section{Standard form for a symplectic orthogonal matrix} \label{app: symp decomposition}
It is a well known fact of linear algebra that a special orthogonal matrix $W\in SO(2n)$ can be reduced to a direct sum of $n$ two-dimensional rotations \eqref{def: R and S1 matrix} by means of an orthogonal change of basis:
\begin{equation}\label{appeq: Omega standard form for W}
{\bf W} = {\bf M}\left(\bigoplus_{j=1}^{n}{\bf R}(\lambda_j)\right) {\bf M}^{\intercal}, \qquad {\bf M}\in O(2n).
\end{equation}
In this appendix we will prove that if ${\bf W}$ is also symplectic, then the same decomposition can be obtained with ${\bf M}\in Sp(2n,\mathbb{R})\cap O(2n)$.
From now on it is convenient to adopt a different quadrature operator vector $\rop$, in which the position-like operators come before the momentum-like ones. With this definition all the relevant properties hold with respect to the symplectic form
\begin{equation}
{\bf J} = \left(\begin{array}{c|c}
 & \bId_n\\\hline
 -\bId_n &
\end{array}\right),
\end{equation}
with zeros in the diagonal blocks.
Notice that with this different ordering the standard form of ${\bf W}$, equivalent to \eqref{appeq: Omega standard form for W}, will be
\begin{equation}\label{appeq: rotation standard form}
{\bf W}_{\oplus}=\left(\begin{array}{c|c}
\bigoplus_j \cos\lambda_j & \bigoplus_j -\sin\lambda_j\\\hline
\bigoplus_j \sin\lambda_j & \bigoplus_j \cos\lambda_j
\end{array}\right).
\end{equation}

In order to complete our proof, we will need two general properties of the symplectic group \cite{Book_DeGosson}, that are stated in the following lemma. The proof is sketched for completeness. Note that a complex matrix $M_{c}$ is said to be symplectic  if it satisfies the same formal requirement as in the real case:
\begin{equation}
{\bf M}_{c} \,{\bf J}\, {\bf M}_{c}^\intercal = {\bf J},
\end{equation}
and we will write ${\bf M}_{c}\in Sp(2n,\mathbb C)$.

\begin{aplem}\label{lemma}
Let us consider the matrix 
\begin{equation}
{\bf K}=\frac{1}{\sqrt{2}}\left(\begin{array}{c|c}
\bId_n & i\bId_n\\\hline
i\bId_n & \bId_n
\end{array}\right),
\end{equation}
that given a real matrix ${\bf M}$ allows for the definition of its complex correspondent as
\begin{equation}
{\bf M}_c = {\bf K}\,{\bf M} \,{\bf K}^{-1}.
\end{equation}
Then:
\begin{itemize}
	\item[$(i)$] ${\bf M}\in Sp(2n,\mathbb R) \;\Longleftrightarrow \;{\bf M}_c\in Sp(2n,\mathbb C)$; 
	\item[$(ii)$] ${\bf M}\in Sp(2n,\mathbb{R})\cap O(2n)\; \Longleftrightarrow\; {\bf M}_c = {\bf U}\oplus{\bf U}^*$, \\for some ${\bf U} \in U(n)$.
\end{itemize}
\end{aplem}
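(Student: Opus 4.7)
The plan is to extract three basic algebraic properties of $\mathbf{K}$ once, and then obtain both parts from routine block computations.

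I would begin by verifying three facts about $\mathbf{K}$: it is unitary and symmetric, so $\mathbf{K}^{-1}=\mathbf{K}^{\dagger}=\mathbf{K}^{*}$ and $\mathbf{K}^{T}=\mathbf{K}$; it satisfies $\mathbf{K}\mathbf{J}\mathbf{K}^{T}=\mathbf{J}$, so that $\mathbf{K}$ itself is complex symplectic; and it diagonalizes $\mathbf{J}$ as $\mathbf{J}_{c}:=\mathbf{K}\mathbf{J}\mathbf{K}^{-1}=(-i\bId_{n})\oplus (i\bId_{n})$. A one-line matrix multiplication also gives $\mathbf{K}^{2}=i\mathbf{X}$, where $\mathbf{X}$ is the $2n\times 2n$ block swap exchanging the two $n$-dimensional halves.

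For $(i)$, inserting $\mathbf{M}=\mathbf{K}^{-1}\mathbf{M}_{c}\mathbf{K}$ into $\mathbf{M}\mathbf{J}\mathbf{M}^{T}$ and using $\mathbf{K}\mathbf{J}\mathbf{K}^{T}=\mathbf{J}$ yields the identity $\mathbf{M}_{c}\mathbf{J}\mathbf{M}_{c}^{T}=\mathbf{K}(\mathbf{M}\mathbf{J}\mathbf{M}^{T})\mathbf{K}^{T}$, so $\mathbf{M}\in Sp(2n,\mathbb{R})$ is equivalent to $\mathbf{M}_{c}\in Sp(2n,\mathbb{C})$.

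For the forward direction of $(ii)$, orthogonality combined with symplecticity forces $\mathbf{M}\mathbf{J}\mathbf{M}^{-1}=\mathbf{J}$, i.e.\ $[\mathbf{M},\mathbf{J}]=0$, hence $[\mathbf{M}_{c},\mathbf{J}_{c}]=0$. Since $\mathbf{J}_{c}$ has only two distinct eigenvalues each of multiplicity $n$, this forces $\mathbf{M}_{c}=\mathbf{A}\oplus\mathbf{B}$ for some complex $n\times n$ matrices. Next, the reality of $\mathbf{M}$ (rewritten using $\mathbf{K}^{-1}=\mathbf{K}^{*}$) becomes $\mathbf{K}^{2}\mathbf{M}_{c}^{*}\mathbf{K}^{-2}=\mathbf{M}_{c}$; with $\mathbf{K}^{2}=i\mathbf{X}$ this collapses to $\mathbf{X}\mathbf{M}_{c}^{*}\mathbf{X}=\mathbf{M}_{c}$, which on a block-diagonal matrix fixes $\mathbf{B}=\mathbf{A}^{*}$. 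Finally, part $(i)$ together with a direct block calculation of $\mathbf{M}_{c}\mathbf{J}\mathbf{M}_{c}^{T}=\mathbf{J}$ reduces to $\mathbf{A}\mathbf{A}^{\dagger}=\bId_{n}$, so $\mathbf{A}=\mathbf{U}\in U(n)$. The converse is then obtained by reversing each step on the ansatz $\mathbf{M}_{c}=\mathbf{U}\oplus\mathbf{U}^{*}$: the identity $\mathbf{X}\mathbf{M}_{c}^{*}\mathbf{X}=\mathbf{M}_{c}$ holds by inspection (giving reality of $\mathbf{M}$); complex symplecticity follows from $\mathbf{U}\mathbf{U}^{\dagger}=\bId_{n}$ (and hence, by $(i)$, real symplecticity of $\mathbf{M}$); and block-diagonality ensures $[\mathbf{M}_{c},\mathbf{J}_{c}]=0$, lifting to $[\mathbf{M},\mathbf{J}]=0$, which combined with symplecticity yields $\mathbf{M}^{T}=\mathbf{M}^{-1}$. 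The only delicate point I anticipate is the bookkeeping of transposes versus conjugate-transposes, since symplecticity involves $\mathbf{M}^{T}$ while the map $\mathbf{M}\mapsto\mathbf{M}_{c}$ involves $\mathbf{K}^{-1}=\mathbf{K}^{*}\neq\mathbf{K}^{T}$ over $\mathbb{C}$; I would therefore write every intermediate relation in a form that tracks the complex conjugation explicitly.
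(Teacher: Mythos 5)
Your proposal is correct and follows essentially the same route as the paper: part $(i)$ rests on the identity ${\bf K}\,{\bf J}\,{\bf K}^{\intercal}={\bf J}$, and part $(ii)$ on the observation that orthogonality plus symplecticity forces $[{\bf M},{\bf J}]=0$, which yields the required block structure. The only (cosmetic) difference is that you carry out the block-structure and reality bookkeeping on the complex side, via $[{\bf M}_c,{\bf J}_c]=0$ and ${\bf X}{\bf M}_c^*{\bf X}={\bf M}_c$, whereas the paper works directly with the real block form $\left(\begin{smallmatrix}{\bf A} & -{\bf B}\\ {\bf B} & {\bf A}\end{smallmatrix}\right)$ and identifies ${\bf U}={\bf A}+i{\bf B}$; the two are equivalent under conjugation by ${\bf K}$.
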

\begin{proof}
$(i)$ The proof of the first statement follows trivially from the symmetry of ${\bf K}$ and the observation 
 	\begin{equation}
 	{\bf K}\,{\bf J}\,{\bf K}={\bf K}^{-1}\,{\bf J}\,{\bf K}^{-1}={\bf J}.
 	\end{equation}

$(ii)$ Let us divide the unitary matrix ${\bf U}$ into its real and imaginary part as
\begin{equation}
{\bf U} = {\bf A} + i {\bf B},
\end{equation}
where the unitarity enforces the symmetry of ${\bf A}{\bf B}^\intercal$ and the relation ${\bf A}{\bf A}^\intercal + {\bf B}{\bf B}^\intercal = \bId$. 
It is straightforward to check that the requirement ${\bf M}_c = {\bf U}\oplus{\bf U}^*$ with ${\bf U} \in U(n)$ is equivalent to
\begin{equation}\label{appeq: M decomposition}
{\bf M} = \left(\begin{array}{c|c}
{\bf A} & -{\bf B} \\ \hline
{\bf B} & {\bf A}
\end{array}\right).
\end{equation}
 We will now prove the equivalence of this last formulation with ${\bf M}\in Sp(2n,\mathbb{R})\cap O(2n)$. If decomposition \eqref{appeq: M decomposition} is assumed, the orthogonality and the symplecticity of the matrix can be explicitly checked showing that indeed ${\bf M}{\bf M}^\intercal = \bId$ and ${\bf M}{\bf J}{\bf M}^\intercal={\bf J}$. Viceversa, if ${\bf M}\in Sp(2n,\mathbb{R})\cap O(2n)$, then
\begin{equation}
{\bf J}\,{\bf M}={\bf M}^{-1\intercal}\,{\bf J} = {\bf M}\,{\bf J},
\end{equation}
which yields the block structure of \eqref{appeq: M decomposition}. The required relations among the blocks follow trivially from the orthogonality of the matrix.
\end{proof}

\begin{thm}
	Every symplectic orthogonal matrix ${\bf W} \in Sp(2n,\mathbb{R})\cap O(2n)$ can be transformed in its standard form ${\bf W}_{\oplus}$ by means of ${\bf M}\in Sp(2n,\mathbb{R})\cap O(2n)$:
	\begin{equation}
	{\bf M}^\intercal\, {\bf W} \,{\bf M} = {\bf W}_{\oplus}.
	\end{equation}
\end{thm}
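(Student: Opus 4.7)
The plan is to reduce the real $2n \times 2n$ problem to an $n \times n$ problem on the unitary group via the complexification map of Lemma, where the spectral theorem gives everything for free.

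First I would invoke Lemma $(ii)$ to write $W_c \equiv {\bf K}\,{\bf W}\,{\bf K}^{-1} = {\bf U}\oplus {\bf U}^{*}$ for some ${\bf U}\in U(n)$. Since ${\bf U}$ is unitary, the spectral theorem produces a diagonalizing unitary ${\bf V}\in U(n)$ such that ${\bf V}^{\dagger}{\bf U}{\bf V} = {\bf D}$, where ${\bf D}=\mathrm{diag}(e^{i\lambda_1},\ldots,e^{i\lambda_n})$ collects the eigenvalues of ${\bf U}$.

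Next I would define ${\bf M}$ as the real matrix whose complexification is ${\bf V}\oplus {\bf V}^{*}$, i.e.\ ${\bf M} \equiv {\bf K}^{-1}({\bf V}\oplus {\bf V}^{*}){\bf K}$. By Lemma $(ii)$ applied in the reverse direction, this ${\bf M}$ lies in $Sp(2n,\mathbb{R})\cap O(2n)$, as required. Because complexification is an algebra morphism, conjugation commutes with it, so
\begin{equation}
({\bf M}^{\intercal}\,{\bf W}\,{\bf M})_c = ({\bf V}^{\dagger}{\bf U}{\bf V})\oplus ({\bf V}^{\intercal}{\bf U}^{*}{\bf V}^{*}) = {\bf D}\oplus {\bf D}^{*},
\end{equation}
using ${\bf M}^{\intercal}={\bf M}^{-1}$ from orthogonality.

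The remaining step, and the only place any explicit computation is needed, is to check that decomplexification of ${\bf D}\oplus {\bf D}^{*}$ produces precisely the canonical form ${\bf W}_{\oplus}$ of Eq.~(\ref{appeq: rotation standard form}). A direct block multiplication
\begin{equation}
{\bf K}^{-1}\,({\bf D}\oplus {\bf D}^{*})\,{\bf K} = \tfrac{1}{2}\bigl({\bf D}+{\bf D}^{*}\bigr)\oplus\cdots
\end{equation}
yields real and imaginary parts $\tfrac{1}{2}({\bf D}\pm {\bf D}^{*})$ populating the diagonal and off-diagonal blocks as $\bigoplus_{j}\cos\lambda_j$ and $\pm\bigoplus_{j}\sin\lambda_j$, which matches $W_{\oplus}$ exactly. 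The main conceptual obstacle is really just setting up the complexification cleanly; after that, the spectral theorem on $U(n)$ does all the work, and only this final bookkeeping calculation remains.
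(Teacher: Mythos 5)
Your proof is correct and follows essentially the same route as the paper's: both apply Lemma~\ref{lemma}$(ii)$ to pass to ${\bf W}_c={\bf U}\oplus{\bf U}^*$, diagonalize ${\bf U}$ by the spectral theorem, pull the diagonalizing unitary back to a real ${\bf M}\in Sp(2n,\mathbb{R})\cap O(2n)$ via ${\bf K}$, and verify that ${\bf K}^{-1}({\bf D}\oplus{\bf D}^*){\bf K}={\bf W}_\oplus$. No gaps; the only point worth being explicit about (which you handle by invoking ${\bf M}^\intercal={\bf M}^{-1}$) is that the conjugation by ${\bf K}$ commutes with inverses rather than with transposes.
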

\begin{proof}
	From Lemma \ref{lemma} it follows that ${\bf W}_c$ can be written in terms of a unitary matrix ${\bf U} \in U(n)$, that can be diagonalized with a unitary matrix ${\bf O} \in U(n)$. Setting ${\bf M}_c = {\bf O}\oplus{\bf O}^*$ one has:
	\begin{equation}\label{appeq: Wc expression}
	{\bf W}_c = {\bf U}\oplus {\bf U}^* = {\bf M}_c\left(\begin{array}{c|c}
	\bigoplus_j e^{i\lambda_j} & \\\hline
	& \bigoplus_j e^{-i\lambda_j}
	\end{array}\right) {\bf M}_c^\dagger,
	\end{equation}
	with $e^{i\lambda_j}$ being the eigenvalues of ${\bf U}$. 
	Being ${\bf W}_c = {\bf K}\,{\bf W}\,{\bf K}^{-1}$ by definition, and noticing that
	\begin{equation}
	{\bf W}_\oplus = {\bf K}^{-1}\left(\begin{array}{c|c}
	\bigoplus_j e^{i\lambda_j} & \\\hline
	& \bigoplus_j e^{-i\lambda_j}
	\end{array}\right){\bf K},
	\end{equation}
	setting ${\bf M} = {\bf K}^{-1}\,{\bf M}_c \,{\bf K} $ Eq.~\eqref{appeq: Wc expression} yields
	\begin{equation}
 {\bf W}  = {\bf M}\,{\bf W}_{\oplus}\,	{\bf M}^\intercal.
	\end{equation}
	The definition of ${\bf M}_c$ and Lemma \ref{lemma} eventually assure that ${\bf M}\in Sp(2n,\mathbb{R})\cap O(2n)$.
\end{proof}

\section{Minimum in s = 1/2 for two-mode squeezing and linear mixing of thermal Gaussian states} \label{app: s minimum}
In this appendix we want to show that the minimum in $s$, that appears in the definition of the Quantum Chernoff Bound \eqref{def: QCB}, is reached for $s=1/2$ when the two classes of states described in the main text are considered, no matter the parameter $\lambda$ used in the optimization set $\mathcal S$. Similarly to the proof in \cite{Illuminati_GaussianDoR} for the $\lambda=\pi/2$ case, the idea is based on the convexity in $s$ of the Quantum Chernoff Bound \cite{Audenaert_QCB}. Instead of showing the symmetry of the involved quantity 
\begin{equation}\label{app: QCB}
\Tr\left[\rho^s \sigma^{1-s}\right], \quad \sigma = U_A \,\rho\, U_A^\dagger,
\end{equation}
under the exchange $s \leftrightarrow 1-s$, we will show that its derivative is zero on $s=1/2$ when the considered state is chosen among those obtained from thermal states via linear mixing or two-mode squeezing. This will yield a statement that holds for every $\lambda$.

Using the cyclicity of the trace the derivative of \eqref{app: QCB} can be written as 
	\begin{equation}
	\frac{\mathrm{d}}{\mathrm{d}s} \Tr\left[\rho^s \sigma^{1-s}\right]=\Tr\left[\left(\log\rho - \log\sigma\right)\rho^s\sigma^{1-s}\right],
	\end{equation}
where up to local unitary maps we can always assume $\rho$ as having null first moments. The same hypothesis can be made on $\sigma$, since the optimization in $U_A$ can always be performed among Gaussian operations purely quadratic in $\rop$ (see Sec. \ref{sec: Gaussian DS} for the proof). 
Being defined as Gibbs state of an Hamiltonian at most quadratic in $\rop$, an undisplaced Gaussian state can always be written as
\begin{equation}
\rho=e^{-\frac{1}{2}\rop^\intercal {\bf H_\rho} \rop},
\end{equation}
where ${\bf H_\rho}$ is diagonalized into some ${\bf D}_\rho$ in terms of the same ${\bf S} \in Sp(2n)$ that intervenes in the Williamson decomposition \eqref{def: Williamson decomposition}:
\begin{equation}\label{app: Gauss on H}
{\bf H_\rho} = {\bf S}^{-1\intercal} {\bf D}_\rho {\bf S}^{-1}.
\end{equation}
Therefore also the Gaussian unitary operation $U_A$, that transform the covariance matrix as in \eqref{eq: Gaussian action on moments}, alters the Hamiltonian in a similar way:
	\begin{equation}
	\log\rho - \log \sigma = - {\bf H}_\rho +  {\bf H}_\sigma =\frac{1}{2}\rop^\intercal \left({\bf \tilde U}_A^{-1\intercal}{\bf H}_\rho {\bf \tilde U}_A^{-1} -  {\bf H}_\rho\right)\rop.
	\end{equation}

Up to a normalization factor proportional to $\Tr(\rho^s)\Tr(\sigma^{1-s})$, the derivative in which we are interested in reads:
	\begin{equation}
\sum_{ij}\left({\bf \tilde U}_A^{-1\intercal}{\bf H}_\rho {\bf \tilde U}_A^{-1} -  {\bf H}_\rho\right)_{ij}\Tr\left[\rho(s)\sigma(1-s)\{r_i,r_j\}_+\right],
\end{equation}
where $\rho(s)$, $\sigma(s)$ are the Gaussian states proportional to $\rho^s$, $\sigma^{1-s}$ and $\{\cdot,\cdot\}_+$ is the anti-commutator. In \cite{Marian_Fidelity} is shown that the product of two Gaussian states $\rho_1\rho_2$ has a characteristic function \eqref{ch funxtion} described by a matrix
\begin{equation}
	\bGamma_{12} = -i\bOmega + \left(\bGamma_2 + i\bOmega\right)\left(\bGamma_1+\bGamma_2\right)^{-1}\left(\bGamma_1 + i\bOmega\right) \label{app: eq Gamma pr},
\end{equation}
where $\bGamma_1,\bGamma_2$ are the covariance matrices of $\rho_1$, $\rho_2$ and $\bOmega$ is the symplectic form \eqref{def: symplectic form}. From definition \eqref{def: Gauss moments2}, and being 
\begin{equation}
	\Tr[\rho^s\sigma^{1-s}] = \Tr[\rho^\frac{s}{2}\sigma^{1-s}\rho^\frac{s}{2}]\in\mathbb R,
\end{equation}
it follows that, up to an irrelevant factor, the quantity that has to be zero for $s=1/2$ is the real part of
\begin{equation}\label{appF: trace}
	\Tr\left[\left({\bf \tilde U}_A^{-1\intercal}{\bf H}_\rho {\bf \tilde U}_A^{-1} -  {\bf H}_\rho\right)\bGamma_{12}\right],
	\end{equation}
where we need to substitute $\bGamma_2={\bf \tilde U}_A\bGamma^{(1-s)}{\bf \tilde U}_A^\intercal$ and $\bGamma_1=\bGamma^{(s)}$ in expression \eqref{app: eq Gamma pr} for $\bGamma_{12}$ [as in the main text, here $\bGamma^{(s)}$ is the covariance matrix of $\rho(s)$, obtained substituting its symplectic eigenvalues $\{\nu_j\}_j$ with $\{\Lambda_s(\nu_j)\}_j$ \eqref{def: lambda function}].
	
	 Exploiting \eqref{app: Gauss on H}, and labelling as ${\bf D}_s$ the block diagonal matrix defining the thermal part of $\bGamma^{(s)}$, the whole real part of \eqref{appF: trace} can be rearranged as:
\begin{widetext}
	\begin{equation}\label{appF: final trace}
	\Tr\left[{\bf D}_\rho \left(\frac{1}{{\bf O}^\intercal {\bf D}_s^{-1} {\bf O} +  {\bf D}_{1-s}^{-1} }-\frac{1}{{\bf O}^{-1\intercal} {\bf D}_{1-s}^{-1} {\bf O}^{-1} + {\bf D}_s^{-1}}\right) + {\bf D}_\rho \left(\frac{1}{{\bf O}^\intercal {\bf D}_s {\bf O} +  {\bf D}_{1-s} }-\frac{1}{{\bf O}^{-1\intercal} {\bf D}_{1-s} {\bf O}^{-1} + {\bf D}_s}\right)\right],
	\end{equation}
\end{widetext}
where the matrix ${\bf O}$ is defined as ${\bf S}^{-1}{\bf \tilde U}_A{\bf S}$.
Notice that to prove that such expression is zero when $s=1/2$, it is enough to show that the quantity
\begin{equation}
\Tr\left[{\bf D}_\rho \left(\frac{1}{{\bf O}^\intercal {\bf D}_{1/2} {\bf O} +  {\bf D}_{1/2} }\right)\right]
\end{equation} 
is invariant under the exchange ${\bf O} \leftrightarrow {\bf O}^{-1}$ for every block diagonal matrix ${\bf D}_\rho$ and ${\bf D}_{1/2}$ (whose blocks are multiples of the identity). Equivalently, from definition~(\ref{DEFTILDE}) and ${\bf U}_A$ explicit expression \eqref{explicit UA}, this quantity must be an even function of $\lambda$ when the symplectic matrix ${\bf S}$ is chosen to be ${\bf S}_{sq}(r)$ \eqref{def: symp squeezing} or ${\bf S}_{lm}(\phi)$ \eqref{def: symp linear mixing}. This can be checked by explicit calculation with the help of a software that allows for analytic manipulations, since the high number of parameters involved makes it not trivial to obtain by hands. 

\section{Relation with Entanglement and total number of photons} \label{app: ent-N relation}
In this appendix we will prove the two propositions stated in sub-Sec. \ref{sec: GDS relation}, concerning the relation between GDS for a two-mode squeezed thermal state and other quantities: the logarithmic negativity and the total number of photons. 

\begin{proof}[Proof of Proposition \ref{thm: ent relation}]
	Following the structure of the statement, we divide the proof in two parts: first we show that pure states give the minimum GDS allowed for a generic state when the entanglement is fixed, and then we prove that the whole region above such value is actually admissible.
 \begin{itemize}
 	\item We need to prove the first statement, which is:
 	\begin{align}
 	{\cal GDS}_A^{(\lambda)}(\nu_1,\nu_2,r)\geq  {\cal GDS}_A^{(\lambda)}\left(1,1,\mathcal E(\nu_1,\nu_2,r)/2\right).
 	\end{align}
 	This can be done in three steps. At first notice that the Gaussian DS \eqref{ris: GDS squeezing} reaches its minimum for a given $r$ when $A_- = 0$, i.e. when $\nu_1=\nu_2$. In the particular case of pure states $(\nu_1=\nu_2=1)$, this yields:
 	\begin{equation}
 	{\cal GDS}_A^{(\lambda)}(\nu_1,\nu_2,r)\geq {\cal GDS}_A^{(\lambda)}(1,1,r).
 	\end{equation} 
 	Secondly, on a pure state GDS is a monotonically increasing function with respect to the absolute value of the squeezing parameter $r$. This can be seen deriving ${\cal GDS}_A^{(\lambda)}(1,1,r)$:
 	\begin{equation}
 	\frac{\mathrm{d}}{\mathrm{d}r} {\cal GDS}_A^{(\lambda)}(1,1,r)=\frac{\sinh(2r)/\sin^2(\lambda/2)}{\left[\cosh^2r+\cot^2(\lambda/2)\right]^2},
 	\end{equation}
 	quantity of that has the same sign of $r$.
 	Therefore in order to conclude this part of the proof, we just need to show the following inequality:
 	\begin{equation}\label{appeq: first thesis}
 	2 |r| = \mathcal E (1,1,r)\geq \mathcal E (\nu_1,\nu_2,r).
 	\end{equation}
 	From definitions \eqref{def: log neg}, \eqref{eq: PPT NuMinus} and \eqref{def: delta tilde} given in the main text, an explicit expression for $\mathcal E$ can be derived: $\mathcal{E} = \max\{0,f(\nu_1,\nu_2,r)\}$, where
 	\begin{widetext}
 		\begin{equation}\label{app: explicit f}
 		f(\nu_1,\nu_2,r)=-\frac{1}{2}\log\left[\cosh(4r)x_+^2 + x_-^2 -\sqrt{\left(\cosh(4r)x_+^2 + x_-^2\right)^2-\nu_1^2\nu_2^2}\right], \quad \text{ 		with}\quad	x_\pm=\frac{1}{2}\left(\nu_1\pm\nu_2\right).
 		\end{equation}
 		\end{widetext}
 		
 	If $f(\nu_1,\nu_2,r)<0$ in \eqref{app: explicit f}, the thesis is trivial. We need to prove that the result holds even when ${ f>0 }$. To do so let us suppose without loss of generality that $r\geq 0$. After some manipulations \eqref{appeq: first thesis} can be shown to be equivalent to the inequality:
 	\begin{align}\label{appeq: long ineq}
 	e^{4r}&\left[(2\nu_1\nu_2)^2-(\nu_1+\nu_2)^2\right]\geq\notag \\
 	& e^{-4r}\left[(\nu_1+\nu_2)^2-4\right]+2(\nu_1-\nu_2)^2,
 	\end{align}
 	where both sides are positive. Since the left-hand side increases with $r$, while the right-hand one decreases with it, we just need to prove the inequality for $r=0$, which is the worst-case scenario. In this case \eqref{appeq: long ineq} can be rearranged as
 	\begin{equation}\label{eq: final ineq}
 	1 + \nu_1^2\nu_2^2 \geq \nu_1^2+\nu_2^2,
 	\end{equation}
 	which holds when $\nu_1,\nu_2\geq 1$.
 \item The second point of the proposition states that two-mode squeezed thermal states cover the entire area above the threshold of pure states and below the upper bound ${\cal GDS}^{(\lambda)}_A = 1$, in the ${\cal GDS}^{(\lambda)}_A - \mathcal{E}$ plot (see for example Figure \ref{fig: GE}). Instead of fixing $\mathcal{E}$, varying $\Delta$ on a $\mathcal{E}$-dependent region, it is easier to do the opposite. Moreover, we can consider only symmetric states with $\nu_1=\nu_2=\nu$. Their GDS does not depend on the actual value of $\nu$, and it is a monotonic increasing function of the squeezing parameter $|r|$, with values in $[0,1[$ (see Figure \ref{fig: GDS} or Eq.~\eqref{ris: GDS squeezing} with $A_-=0$). The logarithmic negativity on the other side has a value of $2|r|-\log\nu$ when $2|r|\geq \log\nu$ and zero otherwise. Therefore, the parameter $0\leq \Delta<1$, defining the value of GDS, fixes the squeezing $|r|$, while $\mathcal{E}$ determines the needed $\nu\geq 1$, that can be obtained only if $\mathcal E$ is not too big with respect to the chosen $|r|$ (or equivalently $\Delta$). It is easy to see that such extreme value for $\mathcal E$ exactly coincides with the one associated with pure two-mode squeezed states.
 \end{itemize}
\end{proof}

\begin{proof}[Proof of Proposition \ref{thm: photon number relation}]
	As for the previous proposition, we divide the proof in two parts: first we prove that pure states achieve the maximum GDS for two-mode squeezed thermal states when the total number of photon is fixed, and then we show that the whole area below such value (and above ${\cal GDS}^{(\lambda)}_A=0$) is admissible.
	
\begin{itemize}
	\item Let us begin by proving
		\begin{align}\label{appeq: thesis 2}
		{\cal GDS}_A&^{(\lambda)}(\nu_1,\nu_2,r)\notag \leq\\ & {\cal GDS}_A^{(\lambda)}\left(1,1,\text{arcsinh}\left(\sqrt{\frac{N_{sq}(\nu_1,\nu_2,r)}{2}}\right)\right).
		\end{align}
	Using Eq.~\eqref{ris: GDS squeezing} for ${\cal GDS}_A^{(\lambda)}$ in the considered class of states, exploiting relation 
	\begin{equation}
	\sinh^2\left[\text{arcsinh}\sqrt{\frac{N_{sq}}{2}}\right] = N_{sq}(N_{sq}+2),
	\end{equation}
	as well as expression \eqref{eq: N for squeezed states} for $N_{sq}(\nu_1,\nu_2,r)$, one can show that \eqref{appeq: thesis 2} is equivalent to
	\begin{widetext}
	\begin{equation}\label{appeq: long ineq 2}
	\cosh^2(2r)\leq\cosh^2(2r)\left(\frac{\nu_1+\nu_2}{2}\right)^2\left[1-\left(\frac{A_-}{A_+}\right)^2\right]+\left(\frac{A_-}{A_+}\right)^2.
	\end{equation}
\end{widetext}
We are now going to prove the stronger inequality obtained from \eqref{appeq: long ineq 2} by dropping the last squared term on the right, so that the dependence upon $r$ disappears. Using definitions \eqref{def: A plus minus} for $A_\pm$, after some manipulations the new inequality can be written as:
\begin{equation}
1-\nu_1\nu_2 \leq \sqrt{\nu_1^2-1}\sqrt{\nu_2^2-1},
\end{equation}
which is evidently true being $\nu_1,\nu_2\geq 1$.

\item  Let us consider the region of the plot ${\cal GDS}_A^{(\lambda)}-N_{sq}$ delimited by the boundary of the pure states and the axis ${\cal GDS}_A^{(\lambda)}=0$. We need to prove that for every couple $(N_{sq},\Delta)$ in such region, there exists a two-mode squeezed thermal state with number of photons $N_{sq}$ and GDS equal to $\Delta$. To do so we can consider once again only symmetric states with $\nu_1=\nu_2=\nu$. Moreover, instead of fixing $N_{sq}$ while moving $\Delta$ on a $N_{sq}$-dependent region, we can do the opposite. Indeed, for a symmetric state GDS is a monotonic function of $|r|$, which is therefore fixed by $\Delta$. The constraint given by imposing a total number of photons $N_{sq}$ then reads:
\begin{equation}
\nu\cosh(2r) -1 = N_{sq},
\end{equation}
that can be solved for $\nu\geq 1$ only if $N_{sq}$ is big enough. It is easy to see that such extreme value for $N_{sq}$ exactly coincides with the one associated with pure two-mode squeezed states.
 \end{itemize}
\end{proof}
\end{document}